\theoremstyle{remark}
\newtheorem{theorem}{Theorem}
\newtheorem{lemma}{Lemma}
\newtheorem{corollary}{Corollary}
\long\def\symbolfootnote[#1]#2{\begingroup\def\thefootnote{\fnsymbol{footnote}}
\footnote[#1]{#2}\endgroup}
\begin{document}
%
\title{Random Beamforming with Heterogeneous Users and Selective Feedback: Individual Sum Rate and Individual Scaling Laws}
%
%
%

\author{Yichao Huang, \IEEEmembership{Member, IEEE}, and Bhaskar D. Rao, \IEEEmembership{Fellow,
IEEE}
\thanks{This research was supported by Ericsson endowed chair funds, the Center for Wireless Communications, UC Discovery grant com09R-156561 and NSF
grant CCF-1115645.}
\thanks{Y. Huang was with Department of Electrical and Computer Engineering, University
of California, San Diego, La Jolla, CA 92093-0407 USA. He is now with Qualcomm, Corporate R\&D, San Diego, CA, USA (e-mail: yih006@ucsd.edu).}
\thanks{B. D. Rao is with Department of Electrical and Computer Engineering,
University of California, San Diego, La Jolla, CA 92093-0407 USA (e-mail: brao@ece.ucsd.edu).}}

%
%

\markboth{IEEE TRANSACTIONS ON WIRELESS COMMUNICATIONS, to appear}%
{IEEE TRANSACTIONS ON WIRELESS COMMUNICATIONS, to appear}
%



\maketitle

\begin{abstract}
This paper investigates three open problems in random beamforming based communication systems: the scheduling policy with heterogeneous users,
the closed form sum rate, and the randomness of multiuser diversity with selective feedback. By employing the cumulative distribution function
based scheduling policy, we guarantee fairness among users as well as obtain multiuser diversity gain in the heterogeneous scenario. Under this
scheduling framework, the individual sum rate, namely the average rate for a given user multiplied by the number of users, is of interest and
analyzed under different feedback schemes. Firstly, under the full feedback scheme, we derive the closed form individual sum rate by employing a
decomposition of the probability density function of the selected user's signal-to-interference-plus-noise ratio. This technique is employed to
further obtain a closed form rate approximation with selective feedback in the spatial dimension. The analysis is also extended to random
beamforming in a wideband OFDMA system with additional selective feedback in the spectral dimension wherein only the best beams for the best-L
resource blocks are fed back. We utilize extreme value theory to examine the randomness of multiuser diversity incurred by selective feedback.
Finally, by leveraging the tail equivalence method, the multiplicative effect of selective feedback and random observations is observed to
establish the individual rate scaling.
\end{abstract}

\begin{IEEEkeywords}
Random beamforming, multi-antenna downlink channels, heterogeneous users, selective feedback, individual sum rate, individual scaling laws,
multiuser diversity
\end{IEEEkeywords}

%
\IEEEpeerreviewmaketitle

\section{Introduction}\label{introduction}
%
%
%
%

In multi-antenna downlink systems, transmission strategies which require less feedback resources \cite{love08, roh06, yoo06, yoo07} to fully
utilize multiuser diversity \cite{knopp95, viswanath02}, but with asymptotic sum capacity comparable to dirty paper coding \cite{caire03,
viswanath03, vishwanath03, yu04, weingarten06}, are favored. The idea of random beamforming \cite{sharif05}, which satisfies the two
aforementioned features has drawn much interest in recent years \cite{zhang07, tang07, al08, wang07, kim08, park09, moon11}. In the basic random
beamforming strategy suggested in \cite{sharif05}, the transmitter with $M$ transmit antennas generates $M$ random orthonormal beams and
requires each user to feed back the $\mathsf{SINR}$ experienced by them for each beam. Then the transmitter schedules users for transmission
that currently have the best channel for each random beam. Despite the considerable literature on this topic, there are three existing open
problems:
\begin{enumerate}
\item How to address heterogeneous users with diverse large scale channel effects and the impact on scheduling policy?
\item What is the closed form sum rate by exact\footnote[1]{We use the term exact to denote results valid for arbitrary but finite number of users as opposed to asymptotic results.} performance analysis?
\item What is the effect of selective feedback, both spatial and spectral, on the randomness of multiuser diversity?
\end{enumerate}

The first problem is related to a practical downlink system setting with asymmetrically located users having heterogeneous large scale channel
effects. This near-far effect was first treated in \cite{sharif05} by observing that the system becomes interference dominated when $M$ is large
enough. In the large $M$ setting, the authors prove that users are asymptotically equiprobable to be scheduled. However, when $M$ is finite and
not increasing simultaneously with the number of users, the greedy scheduling policy employed in \cite{sharif05} can not maintain fairness among
users. Also, if a round robin scheduling policy was utilized, fairness can be guaranteed, but no multiuser diversity gain could be achieved for
capacity growth. Therefore, an alternate scheduling policy is needed to maintain fairness while exploiting multiuser diversity at the same time.
In this paper, the cumulative distribution function (CDF)-based scheduling policy \cite{park05} is leveraged and analyzed in the random
beamforming framework, wherein the user whose rate for a given beam is high enough but least probable to become higher is selected. Under this
scheduling policy, each user can be equivalently viewed as competing with other users with the same CDF, thus making the study of individual
user rate more relevant and interesting than that of the sum rate. In this paper, we develop the notion of individual sum rate, which is the
individual user rate multiplied by the number of users, in order to demonstrate the multiuser diversity gain with user growth for a given user.

The second problem addresses exact system analysis, namely deriving closed form expression for the sum rate for arbitrary but finite number of
users. Note that even with full feedback, wherein each user conveys back the signal-to-interference-plus-noise ratio ($\mathsf{SINR}$) for $M$
beams, the closed form sum rate has not been derived. This is partially due to the complicated form of $\mathsf{SINR}$ and its interplay with
multiuser diversity. In this work, the problem is tackled and solved by a novel probability density function (PDF) decomposition \cite{huang11}
which decomposes and interprets the selected user's $\mathsf{SINR}$. In \cite{huang11}, the homogeneous setting is considered and in this paper,
the technique is extended to the heterogeneous user setting and the closed form individual sum rate is derived. The closed form result under
full feedback helps in evaluating the system performance and acts as the building block for exact analysis with selective feedback.

The third problem is concerned with standard selective feedback in the spatial dimension, wherein each user feeds back the $\mathsf{SINR}$ for
the best beam among the $M$ beams and the corresponding beam index. This selective feedback is fundamentally different than full feedback in two
aspects. The first difference is the two-stage maximization with the first stage maximization carried out by each user for feedback selection
and the second stage maximization carried out by the scheduler to perform user selection. Since the best beam is selected by each user, the
first stage maximization is over $M$ correlated $\mathsf{SINR}$. This correlation issue has been addressed in \cite{pugh10, yang11}, and the CDF
for the selected $\mathsf{SINR}$ at the user side is derived. In this paper, we propose an approximation for the CDF and utilize it to derive
closed form rate approximation. The other fundamental difference is the number of the $\mathsf{SINR}$ values that the scheduler has to maximize
over for each beam. This number is fixed and equals the number of users in the full feedback case. However, with selective feedback, it becomes
a random quantity. In other words, selective feedback results in a random effect on the multiuser diversity. This effect was first observed in
\cite{pugh10}. In this paper, we investigate the randomness of multiuser diversity by extremes over random samples and provide a rigorous
argument on the rate scaling.

The third problem is further  extended to include spectral selectivity  by examining a wideband OFDMA system, which groups the subcarriers into
resource blocks \cite{zhu09} to form the basic scheduling and feedback unit. In order to save feedback resource while not significantly
degrading the system performance, additional selective feedback in the spectral dimension is necessary. The effect of random beamforming in a
wideband system is examined in \cite{svedman04} by extensive simulations, and further studied from a utility function perspective with the
proportional-fair scheduler in \cite{svedman07}. In \cite{fakhereddin09}, analytical results on the asymptotic cluster size is provided. Apart
from the thresholding-based partial feedback strategy \cite{sanayei07}, the best-L selective feedback strategy \cite{hur11} is appealing and
utilized in practical systems such as LTE \cite{dahlman11}. In this paper, we employ the best-L selective feedback strategy to investigate
random beamforming and the effect of spectral dimension selective feedback, which calls for an additional maximization stage at the user side to
perform feedback selection. In this feedback strategy, only the best beams from the best $L$ resource blocks along with the beam and resource
block index are fed back from each user. In this paper, we first derive a closed form rate approximation with exact analysis, i.e., valid for
arbitrary but finite number of users. Then, the influence of the additional spectral dimension selective feedback on the type of convergence is
investigated with the technique of tail equivalence. Moreover, the multiplicative effect of selective feedback and random observations is
observed to establish the rate scaling.

To summarize, the main contributions of this paper are threefold: the utilization of CDF-based scheduling policy to address heterogeneous users
with random beamforming, the obtained closed form rate results with different selective feedback assumptions, and the asymptotic analysis on the
randomness of multiuser diversity incurred by selective feedback. These three contributions analytically examine the raised open problems, and
foster further understanding on random beamforming with heterogeneous users and selective feedback. The organization of this paper is as
follows. Section \ref{system} reviews the basic narrowband system model for random beamforming. The analysis for the full feedback case is
carried out in Section \ref{full}, and  for the spatial dimension selective feedback in Section \ref{spatial}. Section \ref{both} provides the
model for the wideband OFDMA with random beamforming, and examines the effect of additional spectral dimension selective feedback on rate
performance. Finally, Section \ref{conclusion} concludes the paper.

\section{System Model}\label{system}
We consider a multi-antenna narrowband Gaussian downlink channel with $K$ single antenna receivers and a transmitter equipped with $M$ antennas. A
block fading channel model with coherence interval $T$ is assumed. The random beamforming strategy employs $M$ random orthonormal vectors
$\boldsymbol\phi_m\in \mathbb{C}^{M\times 1}$ for $m=1,\ldots,M$, where the $\boldsymbol\phi_i$'s are drawn from an isotropic distribution
independently every $T$ channel uses \cite{sharif05}. Denoting $s_m(t)$ as the $m$th transmission symbol at time $t$,  the transmitted vector of
symbols at time $t$, represented by $\mathbf{s}(t)\in\mathbb{C}^{M\times 1}$, is given as
\begin{equation}\label{system_eq_1}
\mathbf{s}(t)=\sum_{m=1}^M \boldsymbol\phi_m(t)s_m(t), \quad t=1,\ldots,T.
\end{equation}
Let $y_k(t)$ be the received signal at the $k$th user, then
\begin{equation}\label{system_eq_2}
y_k(t)=\sum_{m=1}^M\sqrt{\rho_k}\mathbf{h}_k^{\dag}(t)\boldsymbol\phi_m(t)s_m(t)+v_k(t),
\end{equation}
where $\mathbf{h}_k\in\mathbb{C}^{M\times 1}$ is the complex channel vector which is assumed to be known at the receiver, $v_k$ is the additive
white noise, and the elements of $\mathbf{h}_k$ and $v_k$ are i.i.d. complex Gaussian with zero mean and unit variance $\mathcal{CN}(0,1)$. Note
that this channel assumption corresponds to the Rayleigh fading assumption for the small scale channel effect. From now on, the time variable
$t$ will be dropped for notational convenience. The total transmit power is chosen to be $1$, i.e., $\mathbb{E}[\mathbf{s}^{\dag}\mathbf{s}]=1$,
and thus the received signal-to-noise ratio ($\mathsf{SNR}$) of user $k$ is $\rho_k$. In a practical downlink setting, due to different
locations of users, the large scale channel effects $\rho_k$ which may consist of path loss and shadowing vary across users. From
(\ref{system_eq_2}), the $\mathsf{SINR}$ of the $k$th user for the $m$th transmit beam can be computed as
\begin{equation}\label{system_eq_3}
\mathsf{SINR}_{k,m}=\frac{|\mathbf{h}_k^{\dag}\boldsymbol\phi_m|^2}{M/\rho_k+\sum_{i\neq m}|\mathbf{h}_k^{\dag}\boldsymbol\phi_i|^2},\quad
m=1,\ldots,M.
\end{equation}
Denote $Z_{k,m}\triangleq\mathsf{SINR}_{k,m}$ for notational simplicity. Then for a given beam $m$, the $Z_{k,m}$'s are independent across users
$k$ but non-identically distributed due to different $\rho_k$. For a given user $k$, the $Z_{k,m}$'s are identically distributed and correlated.
Thus the beam index $m$ can be dropped in the expression for the PDF, which is computed in \cite{sharif05} as
\begin{equation}\label{system_eq_4}
f_{Z_k}(x)=\frac{e^{-\frac{M}{\rho_k}x}}{(1+x)^M}\left(\frac{M}{\rho_k}(1+x)+M-1\right)u(x),
\end{equation}
where $u(\cdot)$ is the Heaviside step function. The CDF of $Z_k$ is shown in \cite{sharif05} to be
\begin{equation}\label{system_eq_5}
F_{Z_k}(x)=\left(1-\frac{e^{-\frac{M}{\rho_k}x}}{(1+x)^{M-1}}\right)u(x).
\end{equation}

\section{Full Feedback Analysis}\label{full}
This section is devoted to the analysis for the full feedback case wherein each user feeds back the $\mathsf{SINR}$ for $M$ beams. Since under
full feedback, all the beams are fed back, the order statistics for each beam is over $K$ independent random variables. Thus this case is well
suited for illustration of the scheduling policy and the derivation of the individual sum rate.

\subsection{Scheduling Policy and Individual Sum Rate}\label{full_rate}
After receiving the $\mathsf{SINR}_{k,m}$ from user $k$ for beam $m$, the scheduler is ready to perform user selection. In a homogeneous
setting, selecting the user with the largest $\mathsf{SINR}$ for a given beam maintains fairness and obtains multiuser diversity gain. This
system was analyzed in our recent work \cite{huang11}. The work is now expanded to the more complex heterogeneous case. In a heterogeneous
setting, the greedy scheduling policy would be highly unfair for finite $M$. The round robin scheduling policy can maintain scheduling fairness,
but no multiuser diversity gain can be obtained. The proportional-fair scheduling policy \cite{viswanath02, svedman07} achieves the system
fairness in terms of system utility. However, under the scenario of inter-beam interference, the users' rates are coupled under the
proportional-fair scheduling policy. This coupled effect makes it very difficult, if not impossible, to develop further analytical
results\footnote[2]{Note that extensive simulation results have been provided regarding the use of proportional-fair scheduling policy under
random beamforming in existing literature such as \cite{svedman07}. However, the coupled effect of user's rate prevents further analysis and it
remains an open problem to theoretically understand the system performance of proportional-fair scheduling policy under the heterogeneous user
setting with inter-beam interference.}. Therefore, to tackle this problem it is useful to consider alternate scheduling policies that decouple
each user's rate. In this paper, we employ the CDF-based scheduling policy \cite{park05} for further analysis. According to this policy, the
scheduler will utilize the distribution of the received $\mathsf{SINR}$, i.e., $F_{Z_k}$. It is assumed that the scheduler perfectly knows the
CDF\footnote[3]{This is the only system requirement to conduct the CDF-based scheduling, and the CDF can be obtained by infrequent feedback from
users and learned by the system.}, and it performs the following transformation \cite{park05}:
\begin{equation}\label{full_eq_1}
\tilde{Z}_{k,m}=F_{Z_k}(Z_{k,m}).
\end{equation}
The transformed random variable $\tilde{Z}_{k,m}$ is uniformly distributed ranging from $0$ to $1$, and independent and identically distributed
(i.i.d.) across users for a given beam. Denote $k_m^*$ as the random variable representing the selected user for beam $m$, then
\begin{equation}\label{full_eq_2}
k_m^*=\mathop{\max}\limits_{\mathcal{U}_m}\;\tilde{Z}_{k,m},
\end{equation}
where $\mathcal{U}_m$ denotes the set of users conveying feedback for beam $m$. In the full feedback case, $|\mathcal{U}_m|=K$. After user
$k_m^*$ is selected per (\ref{full_eq_2}), the scheduler utilizes the corresponding $Z_{k_m^*,m}$ for rate matching of the selected user. Let
$X_m$ be the $\mathsf{SINR}$ of the selected user for beam $m$ and now consider the sum rate of the system defined as follows,
\begin{equation}\label{full_eq_3}
R=\mathbb{E}\left[\mathop{\sum}\limits_{m=1}^M\log_2\left(1+X_m\right)\right].
\end{equation}
From the aforementioned formulation, the sum rate can be computed in the following procedure
\begin{align}
R&\mathop{\simeq}\limits^{(a)}M\mathbb{E}_{k_m^*}\left[\int_0^1\log_2\left(1+F_{Z_{k_m^*,m}}^{-1}(x)\right)dx^K\right]\notag\\
\label{full_eq_4}&\mathop{=}\limits^{(b)}\frac{M}{K}\sum_{k=1}^K\int_0^{\infty}\log_2(1+t)d(F_{Z_{k}}(t))^K=\frac{M}{K}\sum_{k=1}^K\mathcal{J}_k(K),
\end{align}
where (a) follows from the sufficient small probability that multiple beams are assigned to the same user; (b) follows from the change of
variable $x=F_{Z_{k_m^*,m}}(t)$, the fair property of the CDF-based scheduling, and the following definition for $\mathcal{J}_k(\epsilon)$
with exponent
$\epsilon\in\mathbb{N}_+$:
\begin{equation}\label{full_eq_5}
\mathcal{J}_k(\epsilon)\triangleq\int_0^{\infty}\log_2(1+x)d(F_{Z_{k}}(x))^{\epsilon}.
\end{equation}

With the help of the CDF-based scheduling, each user feels as if the other users have the same CDF for scheduling competition \cite{park05}.
Therefore, each user's rate is independent of other users making it possible to consider or predict individual user's rate by only examining its
own CDF. It is clear that the scheduling policy is not only fair, but also acknowledges multiuser diversity at the same time. If we denote the
sum rate as the ``macro" level understanding of the system performance, then the individual user rate can be seen as the ``micro" level
understanding of the system performance since this performance metric examines the rate for any specific user and the sum rate can be directly
computed from the individual user rate from all the users. Thus, under the CDF-based scheduling policy, each user's rate can be examined
separately and this property serves as one building block for further analysis with selective feedback.

In order to demonstrate the multiuser diversity gain for each individual user, we define the individual sum rate $\hat{R}_k$ for user $k$ which
is the individual user rate $R_k$ multiplied by the number of users, as follows
\begin{equation}\label{full_eq_6}
\hat{R}_k\triangleq KR_k=M\mathcal{J}_k(K).
\end{equation}
The definition of the individual sum rate under the CDF-based scheduling policy makes it natural to examine the rate scaling for each user
separately, and also provide a ``micro" level understanding of the sum rate scaling. Compared with the sum rate and the individual user rate which can
be treated as performance metrics, the notion of individual sum rate can be regarded as the analytic metric for further scaling analysis.

Note that in the homogeneous setting, $\mathcal{J}_k(\epsilon)$ reduces to
$\mathcal{J}(\epsilon)\triangleq\int_0^{\infty}\log_2(1+x)d(F_Z(x))^{\epsilon}$. It is mentioned in previous works that the exact closed form
for $\mathcal{J}(\epsilon)$ is hard to obtain due to the coupled effect of $\mathsf{SINR}$ and multiuser scheduling. In the sequel, the closed
form expression for $\mathcal{J}_k(\epsilon)$ is obtained which is the key to computing the sum rate given by (\ref{full_eq_4}). The main
technique is employing the following proposed PDF decomposition which readily follows from \cite{huang11}.
\begin{lemma} \label{lemma_1}
(\textit{PDF Decomposition}) $d(F_{Z_k}(x))^{\epsilon}$ can be decomposed as
\begin{equation}\label{full_eq_7}
d(F_{Z_k}(x))^{\epsilon} = \epsilon\sum_{i=0}^{\epsilon-1}{\epsilon-1\choose
i}\frac{(-1)^{i}}{i+1}d\left(1-\frac{e^{-\frac{M(i+1)x}{\rho_k}}}{(1+x)^{(M-1)(i+1)}}\right).
\end{equation}
\end{lemma}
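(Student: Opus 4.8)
The plan is to expand the $\epsilon$-th power of the CDF by the binomial theorem and then match terms against the claimed right-hand side. Writing the complementary CDF as $g(x)\triangleq\frac{e^{-\frac{M}{\rho_k}x}}{(1+x)^{M-1}}$, equation (\ref{system_eq_5}) reads $F_{Z_k}(x)=1-g(x)$ for $x\geq 0$. The key observation is that the $j$-th power of $g$ satisfies $g(x)^j=\frac{e^{-\frac{Mj}{\rho_k}x}}{(1+x)^{(M-1)j}}$, which is exactly the type of term appearing inside the differentials on the right-hand side of (\ref{full_eq_7}) when $j=i+1$. So the whole argument is purely algebraic bookkeeping: convert both sides into a common sum of $d(g(x)^j)$ terms and check that the coefficients agree.

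First I would apply the binomial theorem to obtain $(F_{Z_k}(x))^{\epsilon}=(1-g(x))^{\epsilon}=\sum_{j=0}^{\epsilon}\binom{\epsilon}{j}(-1)^j g(x)^j$, and then take the differential term by term. The $j=0$ term is the differential of the constant $1$ and hence vanishes, so the left-hand side becomes $d(F_{Z_k}(x))^{\epsilon}=\sum_{j=1}^{\epsilon}\binom{\epsilon}{j}(-1)^j\,d(g(x)^j)$. This is the target form I want the right-hand side to reproduce.

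Next I would massage the claimed right-hand side of (\ref{full_eq_7}) into the same shape. Since $d\!\left(1-g(x)^{i+1}\right)=-d(g(x)^{i+1})$, the substitution $j=i+1$ turns the stated expression into $-\epsilon\sum_{j=1}^{\epsilon}\binom{\epsilon-1}{j-1}\frac{(-1)^{j-1}}{j}\,d(g(x)^j)$. The step requiring the most care is the combinatorial identity $\frac{\epsilon}{j}\binom{\epsilon-1}{j-1}=\binom{\epsilon}{j}$, which follows immediately from the factorial definition; applying it collapses the sum to $\sum_{j=1}^{\epsilon}\binom{\epsilon}{j}(-1)^j\,d(g(x)^j)$, matching the left-hand side exactly. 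There is no genuine analytic obstacle here; the only subtlety is keeping the index shift and the sign $(-1)^{j-1}$ versus $(-1)^j$ consistent while absorbing $\binom{\epsilon-1}{i}$ together with the prefactor $\epsilon/(i+1)$ into $\binom{\epsilon}{i+1}$.
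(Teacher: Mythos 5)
Your proof is correct, and it is essentially the argument the paper has in mind: the paper gives no proof of Lemma~\ref{lemma_1} itself but defers to \cite{huang11}, where the identity is obtained by the same binomial bookkeeping (there one differentiates first, writing $d(F_{Z_k}^{\epsilon})=\epsilon F_{Z_k}^{\epsilon-1}\,dF_{Z_k}$ and expanding $(1-g)^{\epsilon-1}$, whereas you expand $(1-g)^{\epsilon}$ and then differentiate; the two are reconciled by exactly the identity $\frac{\epsilon}{j}\binom{\epsilon-1}{j-1}=\binom{\epsilon}{j}$ you invoke). No gap.
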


With the help of this PDF decomposition, $\mathcal{J}_k(\epsilon)$ can be computed in closed form using standard integration techniques whose
expression is presented in the following theorem.
\begin{theorem} \label{theorem_1}
(\textit{Closed Form of $\mathcal{J}_k$})
\begin{equation}\label{full_eq_8}
\mathcal{J}_k(\epsilon)=\frac{\epsilon}{\ln2}\sum_{i=0}^{\epsilon-1}{\epsilon-1\choose
i}\frac{(-1)^{i}}{i+1}\mathcal{I}\left(\frac{M(i+1)}{\rho_k}, (M-1)(i+1)+1\right),
\end{equation}
where $\mathcal{I}(\alpha,\beta)\triangleq \int_0^{\infty}\frac{e^{-\alpha x}}{(1+x)^{\beta}} dx$ whose closed form expression is presented in
Appendix \ref{appenA}.
\end{theorem}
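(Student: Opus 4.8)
The plan is to feed the PDF decomposition of Lemma~\ref{lemma_1} directly into the defining integral~(\ref{full_eq_5}) and then evaluate the resulting elementary integrals by parts. First I would write
\[
\mathcal{J}_k(\epsilon)=\int_0^\infty\log_2(1+x)\,d(F_{Z_k}(x))^\epsilon
=\epsilon\sum_{i=0}^{\epsilon-1}\binom{\epsilon-1}{i}\frac{(-1)^i}{i+1}\int_0^\infty\log_2(1+x)\,d\!\left(1-\frac{e^{-\frac{M(i+1)x}{\rho_k}}}{(1+x)^{(M-1)(i+1)}}\right),
\]
interchanging the finite summation with the integral, which is legitimate because the sum has only $\epsilon$ terms. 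The whole problem then reduces to evaluating, for each $i$, a single integral of the form $\int_0^\infty\log_2(1+x)\,dG_i(x)$, where $G_i(x)=1-\bar G_i(x)$ and $\bar G_i(x)=e^{-a_ix}/(1+x)^{b_i}$ with $a_i=M(i+1)/\rho_k$ and $b_i=(M-1)(i+1)$.

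The key step is integration by parts against the complementary (survival) function $\bar G_i$. Since $dG_i=-d\bar G_i$, I would write
\[
\int_0^\infty\log_2(1+x)\,dG_i(x)=-\bigl[\log_2(1+x)\,\bar G_i(x)\bigr]_0^\infty+\frac{1}{\ln 2}\int_0^\infty\frac{\bar G_i(x)}{1+x}\,dx.
\]
The boundary term vanishes at both ends: at $x=0$ the logarithm is zero, while at $x=\infty$ the factor $e^{-a_ix}$ (with $a_i>0$ for every $i$, since $\rho_k$ is finite) forces the exponential decay to dominate the logarithmic growth. What remains is exactly $\frac{1}{\ln 2}\int_0^\infty e^{-a_ix}/(1+x)^{b_i+1}\,dx=\frac{1}{\ln 2}\,\mathcal{I}(a_i,b_i+1)$, and substituting $a_i=M(i+1)/\rho_k$ and $b_i+1=(M-1)(i+1)+1$ reproduces the $i$th summand of~(\ref{full_eq_8}). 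Collecting the terms yields the claimed closed form, with the evaluation of $\mathcal{I}(\alpha,\beta)$ itself deferred to Appendix~\ref{appenA}.

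The only place where care is genuinely needed is the vanishing of the boundary term and the attendant convergence of $\mathcal{I}(a_i,b_i+1)$; both hinge on the strict positivity of $a_i$, i.e. on $\rho_k<\infty$, together with the logarithm being dominated by the exponential. I expect this to be the main (though mild) obstacle---the remainder is just the bookkeeping of constants---so I would state the boundary estimate explicitly before invoking it, in order to make the interchange of limit and integration rigorous.
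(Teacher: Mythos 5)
Your proposal is correct and follows essentially the same route as the paper's proof in Appendix A: substitute the PDF decomposition of Lemma \ref{lemma_1} into the defining integral, then integrate each term by parts so that the surviving integral is exactly $\mathcal{I}\left(\frac{M(i+1)}{\rho_k},(M-1)(i+1)+1\right)$. The only difference is that you spell out the vanishing of the boundary term (which the paper leaves implicit under the phrase ``integration by parts''), and that verification is accurate since $a_i=M(i+1)/\rho_k>0$ for every $i$.
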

\begin{proof}
The proof is given in Appendix \ref{appenA}.
\end{proof}
\textit{Remark:} A few remarks are in order. Firstly, the analytically useful PDF decomposition decouples the effect of multiuser diversity and
random beamforming, which facilitates the integration. The decomposition is general in that it can be applied to other channel models, though in
this paper the simple Rayleigh channel model is assumed to obtain the $\mathsf{SINR}$ statistics in (\ref{system_eq_5}). Secondly, the derived
closed from results for the individual sum rate and the sum rate only involve finite sums and factorials, which can readily be computed.
Moreover, the derived $\mathcal{J}_k(\epsilon)$ will be employed as a building block for rate computation in Section \ref{spatial} and Section
\ref{both} with selective feedback.

\subsection{Individual Scaling Laws}\label{full_scaling}
With homogeneous setting, the asymptotic sum rate scaling is of interest and has been established as $M\log_2\log_2 K$ \cite{sharif05} given the
$\mathsf{SINR}$ statistics in (\ref{system_eq_5}). It can be easily seen that the multiuser diversity gain is linear with respect to the number
of transmit antennas. With heterogeneous setting employing the CDF-based scheduling, the same technique can be applied to obtain the asymptotic
scaling for the individual sum rate $\hat{R}_k$ of user $k$. We now develop the notion of individual rate scaling and state the individual scaling
laws under full feedback through the following theorem.
\begin{theorem} \label{theorem_2}
(\textit{Individual Scaling Laws Under Full Feedback})
\begin{equation}\label{full_eq_9}
\mathop{\lim}\limits_{K\rightarrow\infty}\frac{\hat{R}_k}{M\log_2\log_2 K}=1.
\end{equation}
\end{theorem}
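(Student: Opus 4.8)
The plan is to reduce the claim to a statement about the maximum of $K$ i.i.d.\ copies of $Z_k$ and then invoke extreme value theory. By the definition in \eqref{full_eq_6}, $\hat{R}_k=M\mathcal{J}_k(K)$, so dividing by $M\log_2\log_2 K$ the assertion is equivalent to $\mathcal{J}_k(K)/\log_2\log_2 K\to 1$. Observing from \eqref{full_eq_5} that $(F_{Z_k}(x))^K$ is precisely the CDF of $Y_K\triangleq\max_{1\le j\le K}Z_{k,j}$, the maximum of $K$ independent draws from $F_{Z_k}$, I would write $\mathcal{J}_k(K)=\mathbb{E}[\log_2(1+Y_K)]$. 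Everything then hinges on the growth rate of the scheduled $\mathsf{SINR}$ $Y_K$.

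First I would locate the typical value of $Y_K$. From \eqref{system_eq_5} the complementary CDF is $1-F_{Z_k}(x)=e^{-\frac{M}{\rho_k}x}/(1+x)^{M-1}$, whose dominant decay is the exponential factor $e^{-\frac{M}{\rho_k}x}$; hence $F_{Z_k}$ belongs to the Gumbel (type-I) domain of attraction, exactly as exploited in \cite{sharif05}. Defining the centering constant $b_K$ through $1-F_{Z_k}(b_K)=1/K$, i.e.\ $\frac{M}{\rho_k}b_K+(M-1)\ln(1+b_K)=\ln K$, one finds to leading order $b_K=\frac{\rho_k}{M}\ln K\,(1+o(1))$, since the polynomial correction $(M-1)\ln(1+b_K)=O(\ln\ln K)$ is negligible. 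Consequently $\log_2 b_K=\log_2\!\big(\tfrac{\rho_k}{M}\ln K\big)+o(1)=\log_2\log_2 K+O(1)$, so the target constant $1$ already appears; it remains only to show that $\mathbb{E}[\log_2(1+Y_K)]$ tracks $\log_2 b_K$.

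To convert the concentration of $Y_K$ into convergence of its expected log-rate, I would sandwich $\mathcal{J}_k(K)$ between matching bounds rather than push the full Gumbel limit through the expectation. For the upper bound, concavity of $\log_2(1+\cdot)$ and Jensen's inequality give $\mathbb{E}[\log_2(1+Y_K)]\le\log_2(1+\mathbb{E}[Y_K])$, and since $\mathbb{E}[Y_K]=\int_0^\infty\big(1-(F_{Z_k}(x))^K\big)\,dx=O(\ln K)$ for an exponentially decaying tail, this is $\log_2\log_2 K+O(1)$, whence $\limsup_K \mathcal{J}_k(K)/\log_2\log_2 K\le 1$. For the lower bound, fix $\delta\in(0,1)$ and bound $\mathbb{E}[\log_2(1+Y_K)]\ge\log_2\!\big(1+(1-\delta)b_K\big)\,\Pr\{Y_K\ge(1-\delta)b_K\}$. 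The event probability equals $1-(F_{Z_k}((1-\delta)b_K))^K$, and because $K\,[1-F_{Z_k}((1-\delta)b_K)]$ diverges (it is $K^{\delta}$ up to a poly-logarithmic factor) the probability tends to $1$; as $\log_2(1+(1-\delta)b_K)=\log_2\log_2 K+O(1)$, letting $K\to\infty$ and then $\delta\downarrow 0$ yields $\liminf_K \mathcal{J}_k(K)/\log_2\log_2 K\ge 1$. Combining the two bounds delivers the limit $1$.

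The main obstacle is not the heuristic scaling but making the $o(\cdot)$ estimates rigorous, i.e.\ controlling both tails of the expectation. The sandwich is designed to bypass a direct uniform-integrability argument: the upper tail (rare large values of the maximum) is tamed by Jensen together with $\mathbb{E}[Y_K]=O(\ln K)$, while the lower tail (rare small values of the maximum) contributes negligibly through the probability-weighted bound. Each of these reduces to the same delicate elementary fact, namely that the exponential factor $e^{-\frac{M}{\rho_k}x}$ dominates the polynomial factor $(1+x)^{M-1}$ in $1-F_{Z_k}$, so that neither the polynomial correction in $b_K$ nor the normalizing exponents perturb the leading growth $\frac{\rho_k}{M}\ln K$. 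I expect verifying these tail estimates carefully to be the only nontrivial work; the surrounding structure follows directly from \eqref{full_eq_5}, \eqref{full_eq_6}, and \eqref{system_eq_5}.
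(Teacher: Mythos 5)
Your argument is correct, and the reduction to $\mathcal{J}_k(K)=\mathbb{E}[\log_2(1+Y_K)]$ with $Y_K$ the maximum of $K$ i.i.d.\ draws from $F_{Z_k}$ is exactly the right starting point, but your route from there differs from the paper's. The paper gives no self-contained proof of Theorem~\ref{theorem_2}: it observes that under CDF-based scheduling the selected SINR for user $k$ has CDF $(F_{Z_k})^K$ and then states that ``the same technique'' as in \cite{sharif05} applies, i.e.\ the extreme-value-theory pipeline that is spelled out later in Appendix~\ref{appenB} for Corollary~\ref{corollary_2}: identify $F_{Z_k}$ as belonging to the Gumbel domain of attraction, compute the normalizing constants $a_{k:K}=\rho_k\log_2 K-\rho_k(M-1)\log_2\log_2 K+o(1)$, $b_{k:K}=\rho_k$ via the growth function $g_k(x)=(1-F_{Z_k}(x))/f_{Z_k}(x)$, and invoke Corollary~A.1 of \cite{sharif05} to pass from the distributional limit to the expectation of $\log_2(1+Y_K)$. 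You instead bypass the weak-convergence machinery with a direct sandwich: Jensen plus $\mathbb{E}[Y_K]=O(\ln K)$ for the $\limsup$, and a probability-weighted bound at $(1-\delta)b_K$ for the $\liminf$. Both approaches ultimately rest on the same tail computation $1-F_{Z_k}(x)=e^{-Mx/\rho_k}(1+x)^{-(M-1)}$ and the quantile $b_K\sim\frac{\rho_k}{M}\ln K$; yours is more elementary and self-contained (it never needs the type of convergence or the normalizing-constants theorem), while the paper's buys reusability, since the same EVT scaffolding is what carries over to the selective-feedback scaling laws in Corollary~\ref{corollary_2} and Theorem~\ref{theorem_5}. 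One small point to tighten: your claim $\mathbb{E}[Y_K]=O(\ln K)$ does not follow from the crude union bound $1-F^K\le K(1-F)$ integrated over all of $[0,\infty)$ (that only gives $O(K)$); you need to split the integral at $Cb_K$ for some $C>1$ and use $K\int_{Cb_K}^\infty(1-F_{Z_k}(x))\,dx\lesssim K\cdot K^{-C}\to 0$. This is exactly the ``delicate elementary fact'' you flag, so the gap is one you have anticipated, but it should be written out.
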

\textit{Remark:} It is seen from Theorem \ref{theorem_2} that users asymptotically follow the same scaling laws in the CDF-based scheduling
policy. The large scale channel effect $\rho_k$ is not written explicitly in (\ref{full_eq_9}) since it is a constant inside the $\log$ term. It
should briefly be noted that the rate scaling only measures the asymptotic trend when $K\rightarrow\infty$ and thus can not accurately match the
exact performance for finite regions of $K$.

\section{Selective Feedback in the Spatial Dimension}\label{spatial}
This section examines selective feedback in the spatial dimension wherein each user only conveys the best beam. This standard user side selection
requires the handling of correlated random variables and the random effect on observations, which are pursued in Section \ref{spatial_rate} and
Section \ref{spatial_scaling}.

\subsection{Individual Sum Rate}\label{spatial_rate}
With selective feedback, each user selects and feeds back the largest $\mathsf{SINR}$ among $M$ beams. As discussed in Section \ref{system}, the
$Z_{k,m}$'s are correlated random variables given $k$. Thus simple order statistics result can not be used to characterize the selected
$\mathsf{SINR}$ at user side. Denote $Y_{k,m^*(k)}=\mathop{\max}\limits_m Z_{k,m}$ representing the selected $\mathsf{SINR}$ for user $k$ with
$m^*(k)$ as the selected beam index. Then according to the derivation in \cite{pugh10, yang11}, the CDF of $Y_{k,m^*(k)}$ is shown to be
\begin{equation}\label{spatial_eq_1}
F_{Y_{k,m^*(k)}}(x)=\left(1-\sum_{\imath=1}^M\frac{[d_{\imath}(x)]_+^Me^{-\frac{2Mx}{\rho_kd_{\imath}(x)}}}{A_{\imath}(x)}\right)u(x),
\end{equation}
where $d_{\imath}(x)=\frac{2(1-(M-\imath)x)}{M-\imath+1}$, $A_{\imath}(x)=d_{\imath}(x)\prod_{i\neq \imath}^M(d_{\imath}(x)-d_i(x))$, and
$[\cdot]_+$ is the positive part of the argument. Note that the distribution does not depend on the selected beam index $m^*(k)$ due to the
identically distributed property across beams and is dropped to simplify notation, i.e., $ F_{Y_{k,m^*(k)}}(x) = F_{Y_{k}}(x)$. Using a similar
procedure to that described in Section \ref{full_rate}, after receiving feedback, the scheduler performs the transformation for user selection:
\begin{equation}\label{spatial_eq_2}
\tilde{Y}_{k,m^*(k)}=F_{Y_k}\left(Y_{k,m^*(k)}\right).
\end{equation}
Compared with (\ref{full_eq_1}), it is clear that $F_{Y_k}=F_{Z_k}$ for the full feedback case. Denote $k_m^*$ as the random variable
representing the selected user for beam $m$, then
\begin{equation}\label{spatial_eq_3}
k_m^*=\mathop{\max}\limits_{\mathcal{U}_m}\;\tilde{Y}_{k,m^*(k)},
\end{equation}
where $\mathcal{U}_m = \{ k : m^*(k) = m\}$ denotes the set of users conveying feedback for beam $m$. $\mathcal{U}_m$ is a set of random size
and the probability mass function (PMF) can be shown to be given by
\begin{equation}\label{spatial_eq_4}
\mathbb{P}(|\mathcal{U}_m|=\tau_1)={K\choose \tau_1}\left(\frac{1}{M}\right)^{\tau_1}\left(1-\frac{1}{M}\right)^{K-\tau_1},\quad 0\leq\tau_1\leq
K.
\end{equation}
Following the derivation in Section \ref{full_rate}, let $X_m$ be the selected $\mathsf{SINR}$ for beam $m$ at the scheduler side, then
conditioned on $k_m^*$ and $|\mathcal{U}_m|=\tau_1$, the conditional CDF of $X_m$ can be written as $F_{X_m\mid
k_m^*,|\mathcal{U}_m|=\tau_1}(x)=(F_{Y_{k_m^*,m}}(x))^{\tau_1}$. By averaging over the randomness of $|\mathcal{U}_m|$, the conditional CDF is
expressed as
\begin{equation}\label{spatial_eq_5}
F_{X_m\mid k_m^*}(x)=\sum_{\tau_1=0}^K{K\choose
\tau_1}\left(\frac{1}{M}\right)^{\tau_1}\left(1-\frac{1}{M}\right)^{K-\tau_1}(F_{Y_{k_m^*,m}}(x))^{\tau_1}.
\end{equation}
From (\ref{full_eq_4}) and (\ref{full_eq_6}), the individual sum rate of user $k$ is derived as\footnote[4]{In this paper, it is assumed that if
no user feeds back $\mathsf{SINR}$ for a certain beam, that beam would be in scheduling outage and would not contribute to rate calculation.}
\begin{equation}\label{spatial_eq_6}
\hat{R}_k=M\sum_{\tau_1=1}^K{K\choose
\tau_1}\left(\frac{1}{M}\right)^{\tau_1}\left(1-\frac{1}{M}\right)^{K-\tau_1}\int_0^{\infty}\log_2(1+x)d(F_{Y_{k}}(x))^{\tau_1}.
\end{equation}

\begin{figure}[t]
\centering
    \includegraphics[width=0.85\linewidth]{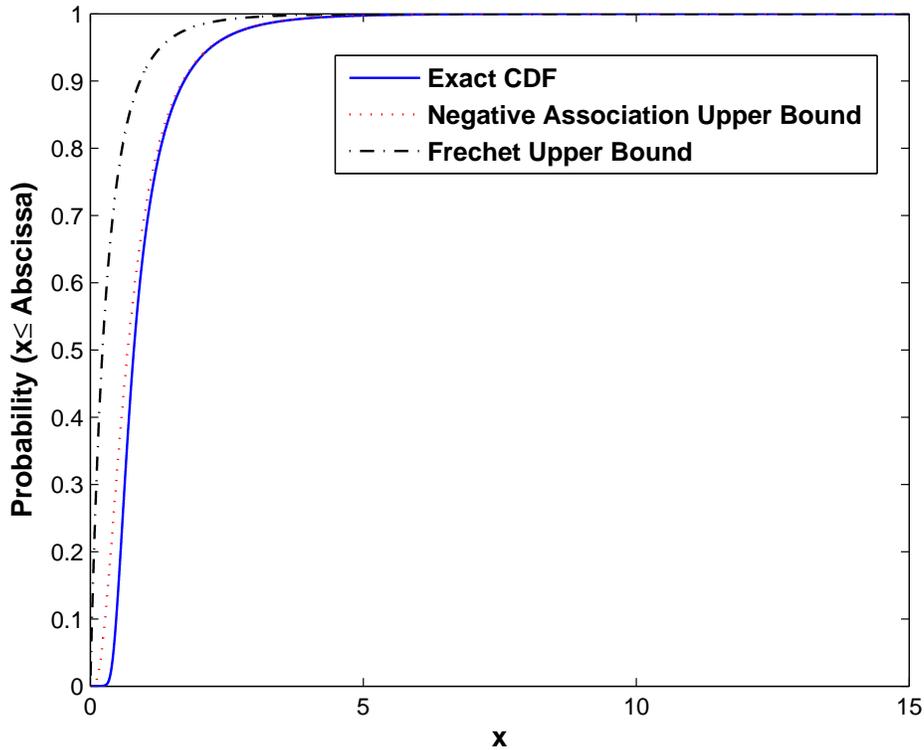}
\caption{Comparison of the exact CDF $F_{Y_k}$ with the Fr{\'e}chet upper bound and the negative association upper bound for spatial dimension
selective feedback ($M=4$, $\rho_k=10$ dB).} \label{fig_1}
\end{figure}

Due to the complicated form of $F_{Y_k}$, the exact closed form expression for (\ref{spatial_eq_6}) is hard to obtain. We now aim to provide an
approximate expression for the closed form  by examining the property of $F_{Y_k}$ and utilizing the established result in Section
\ref{full_rate}. Recall that $Y_k$ is the maximization over $M$ correlated random variables $Z_{k,m}$, thus alternative approximation for
$F_{Y_k}$ would lead to rate approximation. One simple approach is to use the Fr{\'e}chet upper bound \cite{galambos78} for the $Z_{k,m}$'s.
Since the $Z_{k,m}$'s are identically distributed across $m$, the Fr{\'e}chet upper bound yields $F_{Z_k}$. This upper bound is very loose
empirically for $F_{Y_k}$. One suitable approach is inspired by the conjectured negative associated upper bound proposed in \cite{pugh10} to
deal with the minimum mean square error (MMSE) receiver. Our empirical evidence shows that even with single antenna receiver, the $Z_{k,m}$'s
are negative associated \cite{joag83}, thus the upper bound produced by the negative association property can be utilized to approximate
$F_{Y_k}$, namely
\begin{equation}\label{spatial_eq_7}
F_{Y_k}(x)\simeq (F_{Z_k}(x))^M.
\end{equation}
Fig. \ref{fig_1} illustrates the bounds and the empirical CDF $F_{Y_k}$ for $M=4$, $\rho_k=10$ dB. It can be seen that the proposed upper bound
in (\ref{spatial_eq_7}) approximates the exact one in (\ref{spatial_eq_1}) well, especially when the $\mathsf{SINR}$ is large. By using the CDF
approximation, the individual sum rate can be approximated by a closed form expression presented in the following corollary.
\begin{corollary} \label{corollary_1}
(\textit{Closed Form Approximation of Individual Sum Rate})
\begin{equation}\label{spatial_eq_8}
\hat{R}_k\simeq \hat{R}_k^{\mathsf{App}}=M\sum_{\tau_1=1}^K{K\choose
\tau_1}\left(\frac{1}{M}\right)^{\tau_1}\left(1-\frac{1}{M}\right)^{K-\tau_1}\mathcal{J}_k(M\tau_1).
\end{equation}
\end{corollary}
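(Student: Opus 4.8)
The plan is to substitute the negative-association approximation (\ref{spatial_eq_7}) directly into the exact individual sum rate expression (\ref{spatial_eq_6}), recognize that the resulting integral is an instance of the full-feedback building block $\mathcal{J}_k$ evaluated at a rescaled exponent, and then invoke Theorem~\ref{theorem_1} term by term to obtain the closed form $\hat{R}_k^{\mathsf{App}}$ of (\ref{spatial_eq_8}). The whole argument is a sequence of substitutions rather than a new analytic estimate, since the approximation itself has already been justified via the negative association property and displayed in Fig.~\ref{fig_1}.

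Concretely, I would start from (\ref{spatial_eq_6}) and note that the binomial weights ${K\choose\tau_1}(1/M)^{\tau_1}(1-1/M)^{K-\tau_1}$, which arise from the PMF of $|\mathcal{U}_m|$ in (\ref{spatial_eq_4}), do not involve $F_{Y_k}$ and therefore pass unchanged through the approximation; only the integral factor is affected. Replacing $F_{Y_k}(x)$ by $(F_{Z_k}(x))^M$ turns the differential $d(F_{Y_k}(x))^{\tau_1}$ into $d\bigl((F_{Z_k}(x))^{M}\bigr)^{\tau_1}=d(F_{Z_k}(x))^{M\tau_1}$. The key observation is that the two nested exponents collapse to the single exponent $M\tau_1$, so each integral becomes $\int_0^{\infty}\log_2(1+x)\,d(F_{Z_k}(x))^{M\tau_1}$, which is exactly $\mathcal{J}_k(M\tau_1)$ by the definition (\ref{full_eq_5}). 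Since $M\tau_1\in\mathbb{N}_+$ whenever $\tau_1\ge 1$, this exponent is admissible, and Theorem~\ref{theorem_1} supplies its closed form; reassembling the weighted sum yields (\ref{spatial_eq_8}).

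I do not expect a genuine analytic obstacle here once the approximation is granted, so the ``hard part'' is really conceptual bookkeeping rather than estimation. The two points deserving care are: first, the exponent identity $\bigl((F_{Z_k})^M\bigr)^{\tau_1}=(F_{Z_k})^{M\tau_1}$ is what allows the reuse of $\mathcal{J}_k$ without deriving anything new, and it is crucial that the approximation is taken as a power of a single CDF (so that the composition stays within the family handled by Lemma~\ref{lemma_1}); second, the summation range starts at $\tau_1=1$ because the $\tau_1=0$ event corresponds to no user feeding back the beam (scheduling outage), which contributes nothing to the rate. I would also emphasize in the write-up that the symbol $\simeq$ in (\ref{spatial_eq_8}) inherits its accuracy entirely from (\ref{spatial_eq_7}), since every subsequent manipulation is an exact equality.
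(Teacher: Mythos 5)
Your proposal is correct and follows essentially the same route as the paper's own proof in Appendix \ref{appenB}: substitute the approximation $F_{Y_k}(x)\simeq (F_{Z_k}(x))^{M}$ into (\ref{spatial_eq_6}), collapse the nested exponents to $M\tau_1$, and identify the integral as $\mathcal{J}_k(M\tau_1)$ from (\ref{full_eq_5}) and Theorem \ref{theorem_1}. Your added remarks on the admissibility of the exponent $M\tau_1\in\mathbb{N}_+$ and on dropping the $\tau_1=0$ scheduling-outage term are sound and merely make explicit what the paper leaves implicit.
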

\begin{proof}
The proof is given in Appendix \ref{appenB}.
\end{proof}
In order to demonstrate the rate approximation in Corollary \ref{corollary_1}, we conduct a numerical study in Fig. \ref{fig_2} for different
$M$ and $\rho_k$ with respect to the number of users. The exact $\hat{R}_k$ in (\ref{spatial_eq_6}) can be calculated by numerical integration.
It is observed that (\ref{spatial_eq_8}) approximates the exact rate very well, which makes the rate approximation valuable due to its efficient
computational form.

\begin{figure}[t]
\centering
    \includegraphics[width=1.0\linewidth]{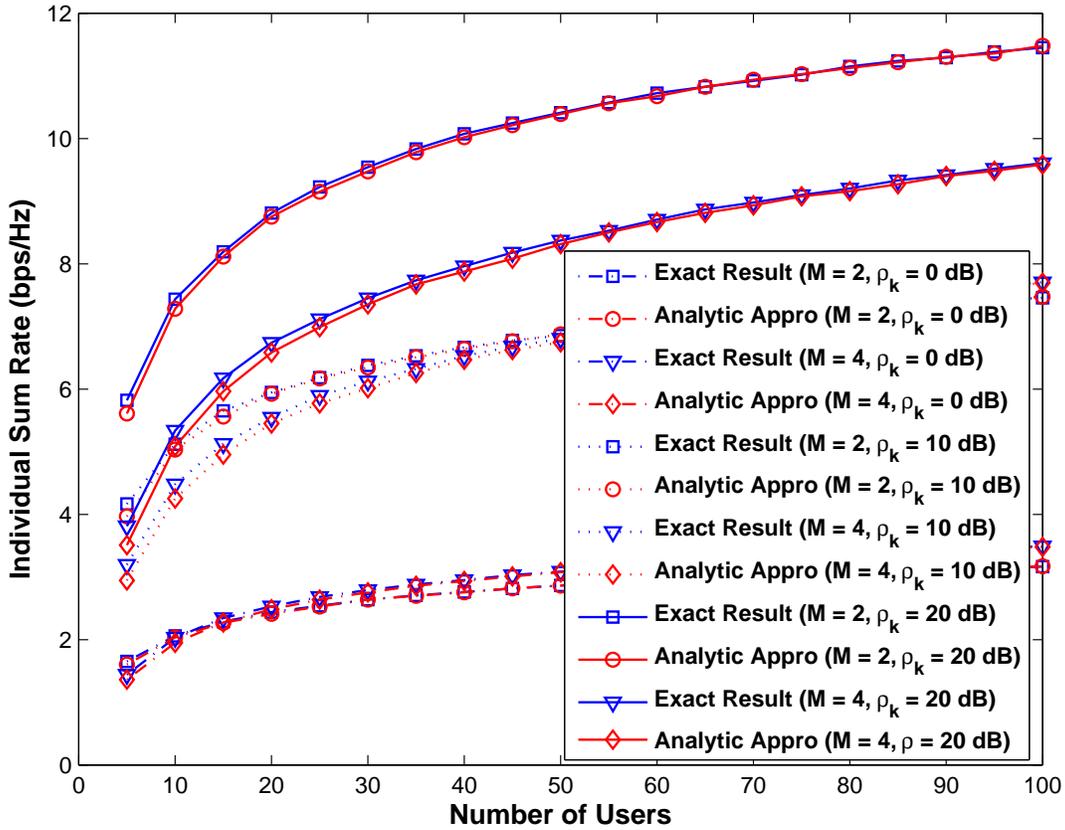}
\caption{Comparison of the exact individual sum rate and the approximated one for a given user with different $M$ and $\rho_k$ with respect to
the number of users ($M=2,4$, $\rho_k=0$ dB, $10$ dB, $20$ dB).} \label{fig_2}
\end{figure}

\subsection{Individual Scaling Laws}\label{spatial_scaling}
The difficulty of dealing with rate scaling with selective feedback is two-fold. Firstly, due to selective feedback of the best beam, the number
of $\mathsf{SINR}$ to maximize over at the scheduler side for each beam is a random quantity. This random effect is reflected in the random set
$\mathcal{U}_m$ in Section \ref{spatial_rate}. Secondly, the normalizing constants for establishing the type of convergence \cite{galambos78,
david03} have to be obtained for a quantity $\vartheta$ other than the number of users $K$ in the full feedback case. In \cite{pugh10}, the
first issue was tackled by the Delta method. In this paper, we solve the first issue by referring to the extremes over random samples, and
rigorously solve the second one by using the normalizing constants theorem. The proof is provided in Appendix \ref{appenB}.

To examine the random effect on multiuser diversity, denote the sequence of random variables $\kappa_m(K)$ as the number of $\mathsf{SINR}$ fed
back for beam $m$ with $K$ users. It is easy to see that $\kappa_m(K)$ are binomial distributed with probability of success $\frac{1}{M}$. Thus
by the strong law of large numbers, as $K$ grows, the number of $\mathsf{SINR}$ fed back for each beam becomes $\frac{K}{M}$. The following
theorem is called upon to deal with this random effect.
\begin{theorem} \label{theorem_3}
(\textit{Extremes with Random Sample Size \cite{galambos78, berman62})} Let, as $K\rightarrow\infty$, $\frac{\kappa(K)}{K}\rightarrow\vartheta$
in probability, where $\vartheta$ is a positive random variable. Assume that there are sequences $a_K\in \mathbb{R}, b_K>0$ such that
$\frac{\Lambda_K-a_K}{b_K}$ converges weakly to a nondegenerate distribution function $G$. Then, as $K\rightarrow\infty$,
\begin{equation}
\label{spatial_eq_9} \lim\mathbb{P}\left(\Lambda_{\kappa(K)}<a_K+b_K x\right)=\int_{-\infty}^{\infty}G^{y}(x)d\mathbb{P}(\vartheta<y).
\end{equation}
\end{theorem}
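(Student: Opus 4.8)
The plan is to reduce the random-index maximum to a deterministic mixture of the fixed-index limit law and then pass to the limit by bounded convergence. Let $F$ denote the common distribution function of the i.i.d.\ observations whose maximum is $\Lambda_n$, so that $\mathbb{P}(\Lambda_n < t) = F^n(t)$, and set $\Psi_K(x) := \mathbb{P}(\Lambda_K < a_K + b_K x) = F^K(a_K + b_K x)$. The hypothesis that $(\Lambda_K - a_K)/b_K$ converges weakly to the nondegenerate law $G$ says precisely that $\Psi_K(x) \to G(x)$ at every continuity point $x$ of $G$; since the admissible extreme-value limits are continuous, this holds for all $x$. Treating $\kappa(K)$ as independent of the observation sequence (as in the cited setup, and as is guaranteed in our application by exchangeability across beams), I would condition on $\kappa(K) = m$ and use the elementary identity $F^m(t) = \bigl(F^K(t)\bigr)^{m/K}$ to obtain, for fixed $x$,
\begin{equation}
\mathbb{P}\bigl(\Lambda_{\kappa(K)} < a_K + b_K x\bigr) = \sum_{m} \mathbb{P}(\kappa(K) = m)\, \Psi_K(x)^{m/K} = \mathbb{E}\Bigl[\Psi_K(x)^{\kappa(K)/K}\Bigr]. \notag
\end{equation}
This mixture representation is the crux: it converts the random sample size into a random exponent acting on the deterministic sequence $\Psi_K(x)$.

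Next I would establish $\mathbb{E}[\Psi_K(x)^{\kappa(K)/K}] \to \mathbb{E}[G(x)^{\vartheta}]$ for each fixed $x$. Write $g_K := \Psi_K(x)$ and $W_K := \kappa(K)/K$, so that $g_K \to g := G(x) \in [0,1]$ deterministically while $W_K \to \vartheta$ in probability with $\vartheta > 0$ almost surely. When $g > 0$, writing $g_K^{W_K} = \exp(W_K \ln g_K)$ and noting that $\ln g_K \to \ln g$ is a convergent deterministic sequence while $W_K \to \vartheta$ in probability gives $W_K \ln g_K \to \vartheta \ln g$ in probability, whence $g_K^{W_K} \to g^{\vartheta}$ in probability by continuity of $\exp$. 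Since $0 \le g_K^{W_K} \le 1$, the bounded convergence theorem (valid for convergence in probability of a uniformly bounded sequence) yields $\mathbb{E}[g_K^{W_K}] \to \mathbb{E}[g^{\vartheta}]$. Rewriting $\mathbb{E}[G(x)^{\vartheta}]$ as a Stieltjes integral against the law $H$ of $\vartheta$ then produces exactly $\int_{-\infty}^{\infty} G^y(x)\, d\mathbb{P}(\vartheta < y)$, since $H$ places no mass on $(-\infty,0]$.

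The main obstacle is the degenerate case $g = G(x) = 0$, where $\ln g_K \to -\infty$ and the product argument in $\exp(W_K \ln g_K)$ breaks down. Here I would argue directly: for any $\delta \in (0,1)$,
\begin{equation}
\mathbb{P}\bigl(g_K^{W_K} > \delta\bigr) = \mathbb{P}\bigl(W_K < \ln\delta / \ln g_K\bigr), \notag
\end{equation}
and since $\ln g_K \to -\infty$ the threshold $\ln\delta / \ln g_K \to 0^{+}$, so this probability tends to $\mathbb{P}(\vartheta \le 0) = 0$ by $W_K \to \vartheta$ in probability together with $\vartheta > 0$ almost surely. Thus $g_K^{W_K} \to 0 = g^{\vartheta}$ in probability in this case as well, and bounded convergence again applies, so the pointwise limit holds for every $x$. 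A secondary, routine point is that the vanishing probability $\mathbb{P}(\kappa(K)=0) \to 0$ makes the undefined empty maximum irrelevant to the limit, and that this pointwise convergence in $x$ identifies the limiting distribution function $x \mapsto \int_{0}^{\infty} G^y(x)\, dH(y)$, which is indeed a bona fide law since it tends to $0$ as $x \to -\infty$ and to $1$ as $x \to \infty$.
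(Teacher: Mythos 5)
The paper never proves this statement: Theorem~\ref{theorem_3} is quoted verbatim from Berman (1962) and Galambos (1978) and used as a black box, so there is no internal proof to compare against. Judged on its own, your argument is the standard proof of the \emph{independent-index} version of the result, and within that setting it is correct: the mixture identity $\mathbb{P}(\Lambda_{\kappa(K)}<a_K+b_Kx)=\mathbb{E}\bigl[\Psi_K(x)^{\kappa(K)/K}\bigr]$, the passage to the limit via convergence in probability of a $[0,1]$-valued sequence plus bounded convergence, the separate treatment of the boundary case $G(x)=0$, and the remark that $\mathbb{P}(\kappa(K)=0)\to 0$ disposes of the empty maximum are all sound. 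The one substantive caveat is that the theorem as stated assumes only $\kappa(K)/K\rightarrow\vartheta$ in probability, with no independence between $\kappa(K)$ and the underlying observations, and your very first step --- conditioning on $\kappa(K)=m$ and writing the conditional law of $\Lambda_m$ as $F^m$ --- is precisely where independence is consumed; without it that identity is false. The general statement remains true, but its proof needs an extra ingredient your route bypasses: either an Anscombe-type argument showing that $\Lambda_{\kappa(K)}$ and $\Lambda_{\lfloor Ky\rfloor}$ agree asymptotically on the event $\{\vartheta\approx y\}$, or R\'enyi's mixing property of normalized maxima, namely $\mathbb{P}(\{\Lambda_K<a_K+b_Kx\}\cap A)\rightarrow G(x)\mathbb{P}(A)$ for every fixed event $A$, which is what allows the conditioning to go through against events generated by $\vartheta$. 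Your parenthetical justification of independence for the paper's application is, however, correct: since $(Z_{k,1},\ldots,Z_{k,M})$ is exchangeable with continuous marginals, the index $m^*(k)$ of the maximum is uniform on $\{1,\ldots,M\}$ and independent of the value $Y_k=\max_m Z_{k,m}$, and users are mutually independent, so $|\mathcal{U}_m|$ is independent of the fed-back values it counts. In short, your proof rigorously covers the special case the paper actually invokes, but as written it establishes a restricted form of the quoted theorem rather than the theorem itself.
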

Therefore, if we denote $\Lambda_{k:\kappa(K)}$ as the extreme order statistics of the received $\mathsf{SINR}$ for each beam of a given user $k$,
then from Theorem \ref{theorem_3}, its CDF can be efficiently approximated by $(F_{Y_k})^{\frac{K}{M}}$. Combining this with the normalizing
constants theorem in Appendix \ref{appenB} yields the following corollary.
\begin{corollary} \label{corollary_2}
(\textit{Individual Scaling Laws Under Spatial Dimension Selective Feedback})
\begin{equation}\label{spatial_eq_10}
\mathop{\lim}\limits_{K\rightarrow\infty}\frac{\hat{R}_k}{M\log_2\log_2 \frac{K}{M}}=1,\quad\quad
\mathop{\lim}\limits_{K\rightarrow\infty}\frac{\hat{R}_k^{\mathsf{App}}}{M\log_2\log_2 K}=1.
\end{equation}
\end{corollary}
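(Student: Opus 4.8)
The plan is to reduce both limits to the asymptotics of a single expected logarithmic rate and then track how the effective sample size enters each case. First I would invoke Theorem~\ref{theorem_3} with $\kappa(K)=\kappa_m(K)$ and $\vartheta=1/M$: since $\kappa_m(K)/K\to 1/M$ in probability, the scheduler-side extreme $\Lambda_{k:\kappa(K)}$ of user $k$ has a limiting law that, as noted right after Theorem~\ref{theorem_3}, is well approximated by $(F_{Y_k})^{K/M}$, i.e.\ the maximum of $n:=K/M$ i.i.d.\ copies of $Y_k$. Writing out (\ref{spatial_eq_6}) as a binomial average of the per-block rates identifies $\hat{R}_k\simeq M\,\mathbb{E}[\log_2(1+\Lambda_{k:\kappa(K)})]$, so it suffices to show $\mathbb{E}[\log_2(1+\Lambda_{k:n})]\sim\log_2\log_2 n$ with $n=K/M$.

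The crux is the upper tail of $F_{Y_k}$ in (\ref{spatial_eq_1}). I would show that as $x\to\infty$ only the $\imath=M$ summand survives: for $\imath<M$ the quantity $d_\imath(x)=\frac{2(1-(M-\imath)x)}{M-\imath+1}$ is eventually negative, so $[d_\imath(x)]_+=0$, whereas $d_M(x)\equiv 2$. Substituting $d_M=2$ and $A_M(x)=2\prod_{i\neq M}(2-d_i(x))$, and using $2-d_i(x)\sim\frac{2(M-i)}{M-i+1}x$, gives $1-F_{Y_k}(x)\sim c_Y\,e^{-Mx/\rho_k}x^{-(M-1)}$ for an explicit positive constant $c_Y$. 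Comparing with $1-F_{Z_k}(x)=e^{-Mx/\rho_k}(1+x)^{-(M-1)}$ from (\ref{system_eq_5}) yields the tail-equivalence relation $\frac{1-F_{Y_k}(x)}{1-F_{Z_k}(x)}\to c_Y\in(0,\infty)$. Hence $F_{Y_k}$ and $F_{Z_k}$ lie in the same (Gumbel) max-domain of attraction, share the same scale constant $b_n$, and their location constants differ only by the bounded shift $b_n\ln c_Y$.

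With tail equivalence established I would apply the normalizing constants theorem of Appendix~\ref{appenB} to the exponential-type tail $e^{-Mx/\rho_k}x^{-(M-1)}$, for which $a_n\sim\frac{\rho_k}{M}\ln n$ and $b_n\sim\frac{\rho_k}{M}$. Since $\Lambda_{k:n}=a_n+b_n\,O_p(1)$ with $b_n/a_n\to 0$, we obtain $\log_2(1+\Lambda_{k:n})\sim\log_2 a_n=\log_2\!\big(\tfrac{\rho_k}{M}\ln n\big)=\log_2\log_2 n+O(1)$; after verifying uniform integrability of the normalized maximum (the same exponential-tail estimate already used for Theorem~\ref{theorem_2}) this passes to the expectation, giving $\mathbb{E}[\log_2(1+\Lambda_{k:n})]\sim\log_2\log_2 n$. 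Setting $n=K/M$ then establishes $\hat{R}_k\sim M\log_2\log_2(K/M)$, the first limit.

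For the second limit I would repeat the argument after replacing $F_{Y_k}$ by the negative-association approximation $(F_{Z_k})^M$ of (\ref{spatial_eq_7}) inside (\ref{spatial_eq_8}). The scheduler-side CDF then becomes $\big((F_{Z_k})^M\big)^{\kappa(K)}=(F_{Z_k})^{M\kappa(K)}$, and since $M\kappa(K)/K\to 1$ the effective sample size is $K$ copies of $Z_k$, which is exactly the full-feedback extreme treated in Theorem~\ref{theorem_2}; hence $\hat{R}_k^{\mathsf{App}}\sim M\log_2\log_2 K$. I expect the main obstacle to be the tail analysis of (\ref{spatial_eq_1}): isolating the dominant $\imath=M$ term through the $[\cdot]_+$ truncations and controlling the $A_\imath(x)$ denominators (whose factors $d_\imath-d_i$ must be kept away from zero) to pin down $c_Y$ is delicate, and it is precisely this tail-equivalence constant that explains why the exact and approximate laws agree to leading order (their ratio tends to $1$ because $M$ is fixed, so $\log_2\log_2(K/M)\sim\log_2\log_2 K$) while naturally carrying the distinct normalizations $K/M$ and $K$.
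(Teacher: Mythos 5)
Your proposal is correct and reaches both limits by the same overall skeleton as the paper (Theorem~\ref{theorem_3} to replace the random sample size $\kappa_m(K)$ by $K/M$, then extreme value theory plus the standard passage from convergence of the normalized maximum to convergence of the expected log rate), but the key step---pinning down the normalizing constants of $F_{Y_k}$---is handled differently. The paper simply cites \cite{pugh10} for the fact that $F_{Y_k}$ lies in the Gumbel domain of attraction and then invokes the normalizing constants theorem (Theorem~\ref{theorem_6}) with $U=\Psi$ and $V=\Psi^M$, solving $\Psi^M(x)=\Psi(A+Bx)$ to get $A=-\log M$, $B=1$ and hence the shift from $\log_2 K$ to $\log_2\frac{K}{M}$. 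You instead compute the upper tail of (\ref{spatial_eq_1}) directly: your observation that the $[d_\imath]_+$ truncation kills every summand except $\imath=M$ is right, and carrying out the product $A_M(x)=2\prod_{i\neq M}(2-d_i(x))$ gives the exact identity $1-F_{Y_k}(x)=M\,\bigl(1-F_{Z_k}(x)\bigr)$ for $x>1$, i.e.\ your constant is $c_Y=M$ (the asymptotically tight union bound over the $M$ beams). This is a genuine tail-equivalence argument in the sense of Theorem~\ref{theorem_4}/\cite{resnick71}, which the paper only deploys later for the spectral-dimension case; applied here it makes the proof self-contained (no reliance on the domain-of-attraction result of \cite{pugh10}), makes the multiplicative effect transparent ($K/M$ samples of $Y_k$ behave exactly like $K$ samples of $Z_k$), and delivers the second limit as an immediate corollary since $(F_{Z_k})^{M\kappa(K)}$ has exponent tending to $K$. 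The only caveats are the ones you already flag: the uniform-integrability step needed to pass from the distributional limit to the expectation (the paper outsources this to Corollary~A.1 of \cite{sharif05}, and you should do the same rather than re-derive it), and the fact that the two stated normalizations $\log_2\log_2\frac{K}{M}$ and $\log_2\log_2 K$ are asymptotically interchangeable, so the displayed distinction is formal rather than substantive---which your closing remark correctly notes.
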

\begin{proof}
The proof is given in Appendix \ref{appenB}.
\end{proof}
\textit{Remark:} The scaling for the exact rate $\hat{R}_k$ and approximated rate $\hat{R}_k^{\mathsf{App}}$ differs in the factor
$\frac{1}{M}$. The rate scaling for $\hat{R}_k^{\mathsf{App}}$ does not have this factor because intuitively the exponent $M$ in the
approximated CDF $(F_{Z_k}(x))^M$ counteracts the reduction in the number of $\mathsf{SINR}$ values for maximization, i.e., $\frac{K}{M}$, due
to selective feedback. We call this effect as the \textit{multiplicative effect}. The detailed proof can be found in Appendix \ref{appenB}. To
draw further insights, we can think of the exponent of $F_{Z_k}(x)$ as the \textit{virtual} users. In the full feedback case, the exponent
equals $K$. In the selective feedback case with the approximated CDF, the exponent asymptotically equals $K$ by the aforementioned
multiplicative effect\footnote[5]{Note that even though the scaling laws are the same for the full feedback and the selective feedback case,
this metric only measures the asymptotic performance when $K$ is large. The exact rate performance is different due to the randomness of
multiuser diversity and the scheduling outage event.}. The notion of virtual users and the multiplicative effect will be investigated further
with both spatial and spectral dimension selective feedback in Section \ref{both_scaling}.

\begin{figure}[t]
\centering
    \includegraphics[width=0.9\linewidth]{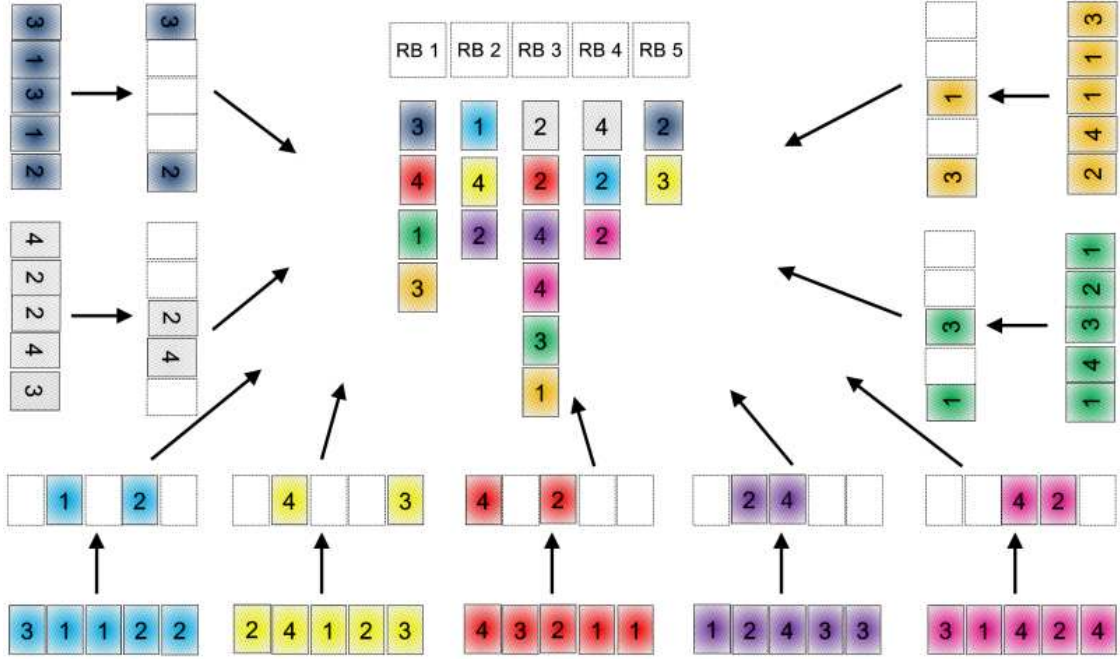}
\caption{Illustration of the spatial and spectral dimension selective feedback and the scheduling result in an OFDMA system (different colors
denote different users $K=9$, $N=5$ resource blocks, $M=4$ beams, the spectral dimension selective feedback $L=2$).} \label{fig_3}
\end{figure}

\section{Selective Feedback in Both Spatial and Spectral Dimension}\label{both}
In this section, random beamforming is embedded in a wideband OFDMA system. The system model is presented in Section \ref{both_model}, the exact
analysis and the asymptotic analysis are examined  in Section \ref{both_rate} and Section \ref{both_scaling} respectively.

\subsection{System Model}\label{both_model}
The system model described in Section \ref{system} is extended to an OFDMA system with $N$ resource blocks. Each resource block is regarded as
the basic scheduling and feedback unit. The random beamforming strategy generates $M$ orthonormal beams $\boldsymbol{\phi}_{m,n}$ for each
resource block. Denote $s_{m,n}$ as the $m$th transmission symbol at resource block $n$, then the received signal $y_{k,n}$ for user $k$ at
resource block $n$ can be expressed as
\begin{equation}\label{both_eq_1}
y_{k,n}=\sum_{m=1}^M\sqrt{\rho_k}\mathbf{H}_{k,n}^{\dag}\boldsymbol\phi_{m,n}s_{m,n}+v_{k,n},
\end{equation}
where $\mathbf{H}_{k,n}\in\mathbb{C}^{M\times 1}$ is the frequency domain channel transfer function of user $k$ at resource block $n$ with
i.i.d. $\mathcal{CN}(0,1)$ elements. To facilitate analysis, $\mathbf{H}_{k,n}$ is assumed to be i.i.d. across resource blocks for a given user.
This corresponds to the widely used block fading approximation in the frequency domain \cite{mceliece84, medard02} due to its simplicity and
capability to provide a good approximation to actual physical channels. The transmit power for a resource block is assumed to be $1$. From
(\ref{both_eq_1}), the $\mathsf{SINR}_{k,n,m}$ of user $k$ at resource block $n$ for beam $m$ is
$\mathsf{SINR}_{k,n,m}=\frac{|\mathbf{H}_{k,n}^{\dag}\boldsymbol\phi_{m,n}|^2}{M/\rho_k+\sum_{i\neq
m}|\mathbf{H}_{k,n}^{\dag}\boldsymbol\phi_{i,n}|^2}$, and is denoted by $Z_{k,n,m}$ for notational simplicity. For a given user $k$, the
$Z_{k,n,m}$'s are i.i.d. across resource blocks for a given beam $m$, and for a given resource block $n$, the $Z_{k,n,m}$'s are identically
distributed and correlated across beams. The CDF of $Z_{k,n,m}$ is given by
$F_{Z_k}(x)=\left(1-\frac{e^{-\frac{M}{\rho_k}x}}{(1+x)^{M-1}}\right)u(x)$, where the index $n$ and $m$ can be dropped due to the identically
distributed property.

\subsection{Individual Sum Rate}\label{both_rate}
With the extra degrees of freedom in the spectral dimension, additional selective feedback at each user side can be made possible by the
following two-stage feedback selection. The first stage selection is in the spatial dimension, where each user selects the best beam with the
largest $\mathsf{SINR}$ for each of the resource block. This process is similar to the narrowband feedback selection discussed in Section
\ref{spatial_rate}. Let $Y_{k,n,m}$ be the outcome of the first stage selection, thus from (\ref{spatial_eq_1}), its CDF can be written as
$F_{Y_k}(x)=\left(1-\sum_{\imath=1}^M\frac{[d_{\imath}(x)]_+^Me^{-\frac{2Mx}{\rho_kd_{\imath}(x)}}}{A_{\imath}(x)}\right)u(x)$, where again the
resource block index $n$ and the beam index $m$ can be dropped due to the identically distributed property across resource blocks and beams. The
second stage selection occurs in the spectral dimension, where each user feeds back the $\mathsf{SINR}$ values of the best $L$ resource blocks
among the total $N$ resource blocks. Let $W_{k,n,m}$ denote the outcome of the second stage selection of user $k$ at resource block $n$ for beam
$m$. Thus this random variable represents the selected $\mathsf{SINR}$ at the user side, whose CDF is of interest for further analysis. It is easy to see that for the case of full feedback in the
spectral dimension, i.e., $L=N$, $F_{W_k}=F_{Y_k}$. For the best-1 feedback case, i.e., $L=1$, $F_{W_k}=(F_{Y_k})^N$ due to
the independent property of $Y_k$ across resource blocks. For the general best-L feedback case, utilizing the results in \cite{hur11}, the CDF
can be shown as
\begin{equation}\label{both_eq_2}
F_{W_k}(x)=\sum_{\ell=0}^{L-1}\xi_1(N,L,\ell)(F_{Y_k}(x))^{N-\ell},
\end{equation}
where $\xi_1(N,L,\ell)=\sum_{i=\ell}^{L-1}\frac{L-i}{L}{N\choose i}{i\choose \ell}\left(-1\right)^{i-\ell}$. The two-stage feedback selection is
demonstrated in Fig. \ref{fig_3} with nine users denoted by different colors, five resource blocks, and four beams. In the illustrated example,
we use best-2 spectral dimension feedback, i.e., $L=2$.

After receiving feedback, the scheduler performs the CDF-based scheduling by first conducting the transformation on the received
$\mathsf{SINR}$,
\begin{equation}\label{both_eq_3}
\tilde{W}_{k,n,m}=F_{W_k}(W_{k,n,m}).
\end{equation}
Denote $k_{n,m}^*$ as the random variable representing the selected user at resource block $n$ for beam $m$, then
\begin{equation}\label{both_eq_4}
k_{n,m}^*=\mathop{\max}\limits_{\mathcal{U}_{n,m}}\;\tilde{W}_{k,n,m},
\end{equation}
where $\mathcal{U}_{n,m}$ denotes the set of users conveying feedback for beam $m$ at resource block $n$. Following the derivation in Section
\ref{spatial_rate}, let $X_{n,m}$ be the selected $\mathsf{SINR}$ for beam $m$ at resource block $n$ at the scheduler side. Then averaging over
the randomness of $|\mathcal{U}_{n,m}|$, the conditional CDF conditioned on $k_{n,m}^*$ can be written as
\begin{equation}\label{both_eq_5}
F_{X_{n,m}\mid k_{n,m}^*}(x)=\sum_{\tau_1=0}^K{K\choose
\tau_1}\left(\frac{1}{M}\right)^{\tau_1}\left(1-\frac{1}{M}\right)^{K-\tau_1}\sum_{\tau_2=0}^{\tau_1}{\tau_1\choose
\tau_2}\left(\frac{L}{N}\right)^{\tau_2}\left(1-\frac{L}{N}\right)^{\tau_1-\tau_2}(F_{W_{k_{n,m}^*,n,m}}(x))^{\tau_2}.
\end{equation}
For further derivation, $(F_{W_k}(x))^{\tau_2}$ is manipulated into the following form by the power series expansion \cite{gradshteyn07, hur11}:
\begin{equation} \label{both_eq_6}
(F_{W_k}(x))^{\tau_2}=\sum_{\ell=0}^{\tau_2(L-1)}\xi_2(N,L,\tau_2,\ell)(F_{Y_k}(x))^{N\tau_2-\ell},
\end{equation}
where
\begin{equation} \label{both_eq_7}
\xi_2(N,L,\tau_2,\ell)=\left\{
\begin{array}{l}
(\xi_1(N,L,0))^{\tau_2},\quad \ell=0\\ \frac{1}{\ell\xi_1(N,L,0)}\sum_{i=1}^{\min(\ell,L-1)}((\tau_2+1)i-\ell)\\
\quad\times\xi_1(N,L,i)\xi_2(N,L,\tau_2,\ell-i),\quad 1\leq \ell<\tau_2(L-1)\\ (\xi_1(N,L,L-1))^{\tau_2},\quad \ell=\tau_2(L-1).\\
\end{array} \right.
\end{equation}
Following the same procedure as in Section \ref{spatial_rate}, the individual sum rate for user $k$ can be derived as
\begin{align}
\hat{R}_k&=\frac{1}{N}\sum_{n=1}^N\mathbb{E}\left[\sum_{m=1}^M\log_2\left(1+X_{n,m}|k_{n,m}^*=k\right)\right]\notag\\
&=M\sum_{\tau_1=1}^K{K\choose
\tau_1}\left(\frac{1}{M}\right)^{\tau_1}\left(1-\frac{1}{M}\right)^{K-\tau_1}\sum_{\tau_2=1}^{\tau_1}{\tau_1\choose
\tau_2}\left(\frac{L}{N}\right)^{\tau_2}\left(1-\frac{L}{N}\right)^{\tau_1-\tau_2}\notag\\
\label{both_eq_8}&\quad\times \sum_{\ell=0}^{\tau_2(L-1)}\xi_2(N,L,\tau_2,\ell)\int_0^{\infty}\log_2(1+x)d(F_{Y_k}(x))^{N\tau_2-\ell}.
\end{align}
In order to obtain the closed form rate approximation for $\hat{R}_k$, the CDF approximation proposed in (\ref{spatial_eq_7}) by the negative
association property is utilized to approximate $F_{Y_k}$. The closed form result is presented in the following corollary.
\begin{corollary} \label{corollary_3}
(\textit{Closed Form Approximation of Individual Sum Rate})
\begin{align}
\hat{R}_k\simeq \hat{R}_k^{\mathsf{App}}&=M\sum_{\tau_1=1}^K{K\choose
\tau_1}\left(\frac{1}{M}\right)^{\tau_1}\left(1-\frac{1}{M}\right)^{K-\tau_1}\sum_{\tau_2=1}^{\tau_1}{\tau_1\choose
\tau_2}\left(\frac{L}{N}\right)^{\tau_2}\left(1-\frac{L}{N}\right)^{\tau_1-\tau_2}\notag\\
\label{both_eq_9}&\quad\times \sum_{\ell=0}^{\tau_2(L-1)}\xi_2(N,L,\tau_2,\ell)\mathcal{J}_k(M(N\tau_2-\ell)).
\end{align}
\end{corollary}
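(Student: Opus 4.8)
The plan is to obtain the claimed closed-form approximation by inserting the negative association CDF approximation $F_{Y_k}(x)\simeq (F_{Z_k}(x))^M$ from (\ref{spatial_eq_7}) directly into the exact individual sum rate (\ref{both_eq_8}), and then recognizing each resulting integral as an instance of $\mathcal{J}_k$ whose closed form is supplied by Theorem \ref{theorem_1}. This mirrors the argument used for Corollary \ref{corollary_1}, now carried through the additional spectral-dimension summation over $\tau_2$ and the power-series index $\ell$ introduced by (\ref{both_eq_6}).

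First I would isolate the innermost integral appearing in (\ref{both_eq_8}), namely $\int_0^{\infty}\log_2(1+x)\,d(F_{Y_k}(x))^{N\tau_2-\ell}$. The combinatorial coefficients $\xi_2(N,L,\tau_2,\ell)$ arising from the expansion (\ref{both_eq_6}) depend only on $\xi_1(N,L,\cdot)$ and are therefore independent of the channel statistics; consequently the approximation touches only the factor $(F_{Y_k}(x))^{N\tau_2-\ell}$ and leaves the weights intact. Applying (\ref{spatial_eq_7}) yields $(F_{Y_k}(x))^{N\tau_2-\ell}\simeq\bigl((F_{Z_k}(x))^M\bigr)^{N\tau_2-\ell}=(F_{Z_k}(x))^{M(N\tau_2-\ell)}$, so by the definition (\ref{full_eq_5}) the integral approximates to $\mathcal{J}_k(M(N\tau_2-\ell))$ with integer exponent $\epsilon=M(N\tau_2-\ell)$.

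Next I would substitute this approximation back into each term of the triple sum in (\ref{both_eq_8}), retaining the binomial weights for $\tau_1$ (spatial selection) and $\tau_2$ (spectral selection) and the coefficients $\xi_2(N,L,\tau_2,\ell)$ unchanged. This produces (\ref{both_eq_9}) verbatim. The closed form of each $\mathcal{J}_k(M(N\tau_2-\ell))$ then follows immediately from Theorem \ref{theorem_1}, which expresses $\mathcal{J}_k$ as a finite sum of the integrals $\mathcal{I}(\alpha,\beta)$ evaluated in Appendix \ref{appenA}; hence $\hat{R}_k^{\mathsf{App}}$ is a finite, explicitly computable quantity.

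The main difficulty is not analytic but a matter of book-keeping: one must verify that the exponent $M(N\tau_2-\ell)$ stays a positive integer over the entire range $0\le\ell\le\tau_2(L-1)$ so that Lemma \ref{lemma_1} and Theorem \ref{theorem_1} are applicable, and that the approximation is inserted consistently inside all three nested sums without perturbing the $\xi_2$ coefficients. Since $N\tau_2-\ell\ge N\tau_2-\tau_2(L-1)=\tau_2(N-L+1)\ge 1$ for $\tau_2\ge 1$ and $L\le N$, the exponent condition holds, and because the heavy lifting—the exact expression (\ref{both_eq_8}), the power-series coefficients $\xi_2$, and the closed form for $\mathcal{J}_k$—has already been established, the corollary reduces to this consistent substitution.
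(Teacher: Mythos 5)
Your proposal is correct and follows exactly the route the paper intends: the paper offers no separate proof for Corollary \ref{corollary_3}, deriving it (as you do) by substituting the negative-association approximation $F_{Y_k}(x)\simeq (F_{Z_k}(x))^M$ into the exact expression (\ref{both_eq_8}) so that each integral becomes $\mathcal{J}_k(M(N\tau_2-\ell))$, mirroring the proof of Corollary \ref{corollary_1} in Appendix \ref{appenB}. Your additional check that the exponent $M(N\tau_2-\ell)$ remains a positive integer over the full range of $\ell$ is a small but worthwhile piece of rigor the paper leaves implicit.
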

To understand the impact of spectral dimension selective feedback, we conduct a numerical study assuming $N=10$, $M=4$. Fig. \ref{fig_4} plots
the exact and approximated rate for different $L$ under $\rho_k=10$ dB with respect to the number of users. It can be seen that when the number
of users is small, there is a certain rate gap between selective feedback and full feedback. However, the gap becomes
negligible when the number of users increases. In Fig. \ref{fig_5}, the performance is observed for different $\rho_k$ for $K=20$. From the two
figures, we can see that the proposed rate approximation tracks the exact performance very well.

\begin{figure}[t]
\centering
    \includegraphics[width=0.9\linewidth]{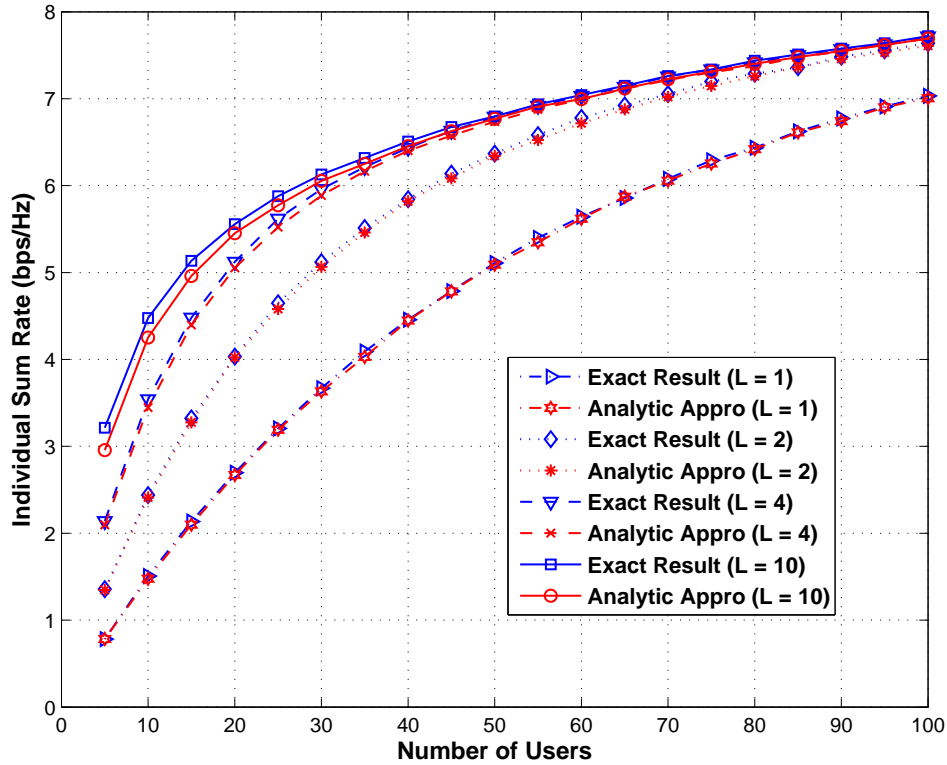}
\caption{Comparison of the exact individual sum rate and the approximated one for a given user with different spectral dimension selective
feedback $L$ with respect to the number of users ($M=4$, $N=10$, $\rho_k=10$ dB, $L=1,2,4,10$).} \label{fig_4}
\end{figure}

\begin{figure}[t]
\centering
    \includegraphics[width=0.9\linewidth]{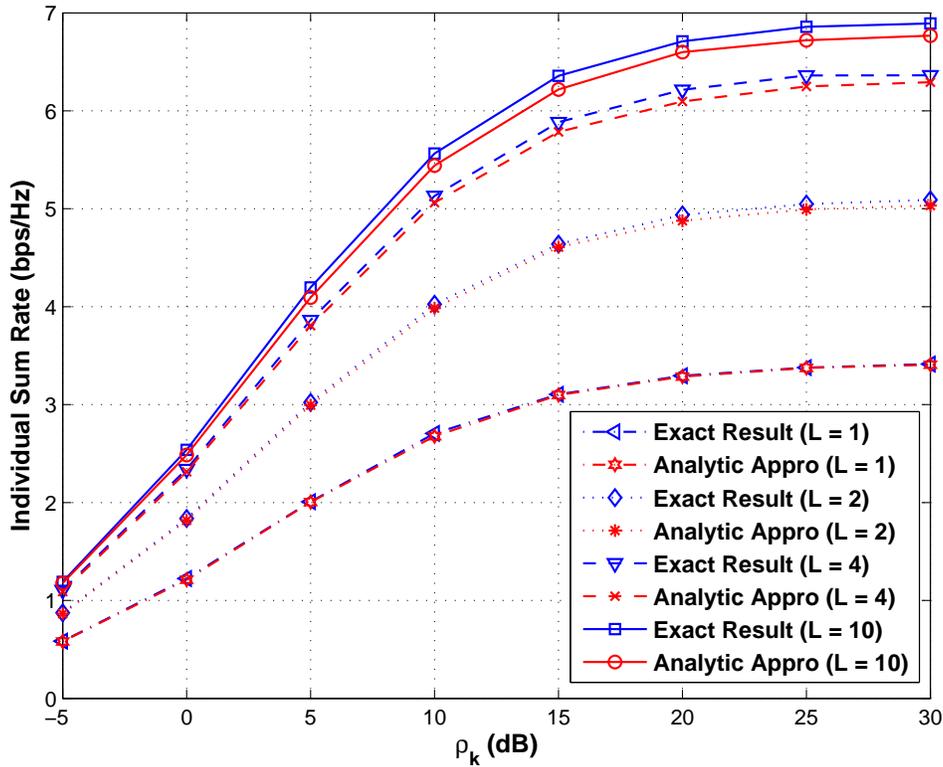}
\caption{Comparison of the exact individual sum rate and the approximated one for a given user with different spectral dimension selective
feedback $L$ with respect to different $\rho_k$ ($M=4$, $N=10$, $K=20$, $L=1,2,4,10$).} \label{fig_5}
\end{figure}

\subsection{Individual Scaling Laws}\label{both_scaling}
We now examine the rate scaling with selective feedback in both spatial and spectral dimension. In Section \ref{spatial_scaling} with spatial
dimension selective feedback, the CDF of interest is $F_{Y_k}$ and the number of $\mathsf{SINR}$ to maximize over at the scheduler side for each
beam approaches $\frac{K}{M}$. With additional spectral dimension feedback, the CDF of $F_{W_k}$ is of primary interest. To get a handle on the
randomness of multiuser diversity for this case, an  approach similar to that in Section \ref{spatial_scaling} can be utilized. Let the sequence of random
variables $\kappa_{n,m}(K)$ be the number of $\mathsf{SINR}$ values fed back for beam $m$ at resource block $n$ with $K$ users. It is easy to
see that $\kappa_{n,m}(K)$ are binomial distributed with probability of success $\frac{L}{MN}$. Therefore, by the strong law of large numbers,
as $K$ grows, the number of $\mathsf{SINR}$ values fed back for each beam at each resource block becomes $\frac{KL}{MN}$. Moreover, the
convergence property of the sequence $\kappa_{n,m}(K)$ can be shown by invoking the central limit theorem:
\begin{equation}\label{both_eq_10}
\mathop{\lim}\limits_{K\rightarrow\infty}\sqrt{K}\left(\frac{\kappa_{n,m}(K)}{K}-\frac{L}{MN}\right)\mathop{\rightarrow}^{d}\mathcal{N}\left(0,\frac{L}{MN}\left(1-\frac{L}{MN}\right)\right),
\end{equation}
where $d$ indicates convergence in distribution. By applying Theorem \ref{theorem_3}, the extreme order statistics of the received
$\mathsf{SINR}$ for each beam at each resource block for a given user $k$ can be efficiently approximated by
$\left(F_{W_k}\right)^{\frac{KL}{MN}}$.

\begin{figure}[t]
\centering
    \includegraphics[width=0.85\linewidth]{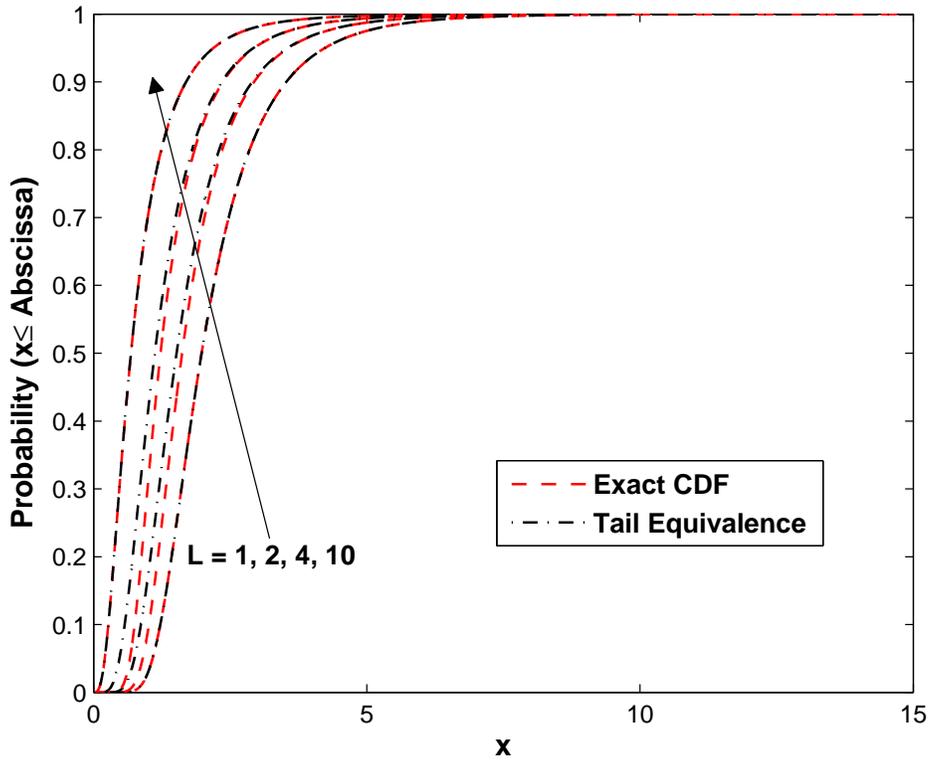}
\caption{Comparison of the exact CDF $F_{W_k}$ and its tail equivalence for different spectral dimension selective feedback $L$ ($M=4$, $N=10$,
$\rho_k=10$ dB, $L=1,2,4,10$).} \label{fig_6}
\end{figure}

Now the remaining problem is to examine the type of convergence of $F_{W_k}$. Recall the formulation of $F_{W_k}$ as:
$F_{W_k}(x)=\sum_{\ell=0}^{L-1}\xi_1(N,L,\ell)(F_{Y_k}(x))^{N-\ell}$. It is known that $F_{Y_k}$ converges weakly to the type $3$ Gumbel
distribution. Due to the complicated form of $\xi_1(\cdot,\cdot,\cdot)$, it is tedious to directly check the conditions for proving the type of
convergence. In order to investigate the tail behavior of $F_{W_k}$ which dominates the type of convergence \cite{david03}, the following tail
equivalence theorem is called upon.
\begin{theorem} \label{theorem_4}
(\textit{The Tail Equivalence Theorem \cite{resnick71})} $U(\cdot)$ and $V(\cdot)$ are distribution functions such that
\begin{equation}
\label{both_eq_11} \lim_{x\rightarrow\infty}\frac{1-U(x)}{1-V(x)}=1.
\end{equation}
If there exist normalizing constants $a_K$, $b_K>0$ such that $U^K(a_K+b_Kx)\rightarrow G(x)$, where $G(x)$ is non-degenerate, then
$V^K(a_K+b_Kx)\rightarrow G(x)$.
\end{theorem}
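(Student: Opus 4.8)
The plan is to reduce the statement about the $K$-th powers $U^K$ and $V^K$ to an equivalent statement about the tails $1-U$ and $1-V$, where the hypothesis lives, and then transport the convergence from $U$ to $V$ using the tail-equivalence ratio. Throughout I take the common right endpoint of $U$ and $V$ to be $+\infty$, consistent with the limit in \eqref{both_eq_11} being taken as $x\to\infty$ and with the unbounded support of the SINR distributions at hand.

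First I would record the elementary bridge step: for a fixed continuity point $x$ of $G$ with $0<G(x)\le 1$, the convergence $U^K(a_K+b_Kx)\to G(x)$ holds if and only if $K\bigl(1-U(a_K+b_Kx)\bigr)\to -\ln G(x)$. The forward direction forces $U(a_K+b_Kx)\to 1$ (otherwise $U^K(a_K+b_Kx)$ would be squeezed below $c^K\to 0$ along a subsequence on which $U\le c<1$, contradicting $G(x)>0$), after which the expansion $\ln U=-(1-U)+O((1-U)^2)$ turns $U^K=e^{K\ln U}$ into $e^{-K(1-U)(1+o(1))}$ and gives the equivalence; the converse is the same computation run backwards.

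Next, for an interior continuity point $x$ with $0<G(x)<1$, the bridge step gives $K\bigl(1-U(a_K+b_Kx)\bigr)\to-\ln G(x)\in(0,\infty)$, so $1-U(a_K+b_Kx)\to 0$, i.e. $U(a_K+b_Kx)\to 1$. Since the right endpoint is $+\infty$, this forces $y_K:=a_K+b_Kx\to\infty$: if $y_K\le M$ along a subsequence then $U(y_K)\le U(M)<1$, a contradiction. With $y_K\to\infty$ in hand the hypothesis \eqref{both_eq_11} applies, and writing $K(1-V(y_K))=K(1-U(y_K))\cdot\frac{1-V(y_K)}{1-U(y_K)}$ yields $K(1-V(y_K))\to-\ln G(x)$; the bridge step (converse direction) then delivers $V^K(a_K+b_Kx)\to G(x)$. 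This is the crux of the argument, and I expect the delicate point to be precisely the verification that the shifted and scaled levels $a_K+b_Kx$ escape to the tail region, since only there does the hypothesis \eqref{both_eq_11} have any content.

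Finally I would dispatch the boundary continuity points by monotonicity rather than by \eqref{both_eq_11}, since there $a_K+b_Kx$ need not tend to infinity. Here I use that the nondegenerate limit $G$ is, by the extremal types theorem, continuous. For a continuity point $x_0$ with $G(x_0)=0$, pick interior points $x\downarrow x_0$ with $0<G(x)<1$; monotonicity of $V^K$ together with $b_K>0$ gives $V^K(a_K+b_Kx_0)\le V^K(a_K+b_Kx)\to G(x)$, and letting $x\downarrow x_0$ with $G(x)\to 0$ forces $V^K(a_K+b_Kx_0)\to 0=G(x_0)$. The symmetric squeeze from below handles $G(x_0)=1$. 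Convergence at every continuity point of $G$ is exactly the weak convergence $V^K(a_K+b_Kx)\to G(x)$, completing the proof.
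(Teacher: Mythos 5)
The paper gives no proof of Theorem \ref{theorem_4}; it is imported as a known result from Resnick's 1971 work on tail equivalence, so there is no in-paper argument to compare against. Your proof is a correct, self-contained derivation along the standard extreme-value-theory route: the Poisson-approximation bridge $U^K(a_K+b_Kx)\rightarrow G(x)$ if and only if $K\bigl(1-U(a_K+b_Kx)\bigr)\rightarrow-\ln G(x)$, the observation that $0<G(x)<1$ forces $U(a_K+b_Kx)\rightarrow 1$ and hence, because the hypothesis (\ref{both_eq_11}) implicitly requires both right endpoints to be $+\infty$, that the levels $a_K+b_Kx$ escape to the tail region where (\ref{both_eq_11}) has content, and finally the transfer $K(1-V)=K(1-U)\cdot\frac{1-V}{1-U}\rightarrow-\ln G(x)$. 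You correctly isolated the one delicate step (the escape of $a_K+b_Kx$ to infinity), which is exactly what makes the tail hypothesis usable. The only loose phrasing is in the boundary case $G(x_0)\in\{0,1\}$: points $x$ with $0<G(x)<1$ need not approach $x_0$ (for instance when $x_0$ lies strictly below the left endpoint of the support of $G$), so ``letting $x\downarrow x_0$'' should read ``choosing $x>x_0$ with $G(x)$ positive but arbitrarily small,'' which exists by continuity of the extreme-value limit $G$; the monotone squeeze $V^K(a_K+b_Kx_0)\le V^K(a_K+b_Kx)\rightarrow G(x)$ then closes the argument unchanged. This is a cosmetic fix, not a gap, and your proof supplies a justification the paper itself omits.
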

From Theorem \ref{theorem_4} one can infer that if two distribution functions are tail equivalent, then they belong to the domain of attraction of the same
type. Employing Theorem \ref{theorem_4}, a tail equivalent formulation can be obtained for $F_{W_k}$ expressed in the following corollary.
\begin{corollary} \label{corollary_4}
(\textit{Tail Equivalent CDF}) $F_{W_k}(x)$ is tail equivalent to $(F_{Y_k}(x))^{N-\sum_{\ell=0}^{L-1}\xi_1(N,L,\ell)\ell}$.
\end{corollary}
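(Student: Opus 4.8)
The plan is to verify the hypothesis of Theorem \ref{theorem_4} directly, namely that the ratio of the survival functions of $F_{W_k}$ and of $(F_{Y_k})^{P}$ tends to $1$ as $x\to\infty$, where I abbreviate the target exponent by $P\triangleq N-\sum_{\ell=0}^{L-1}\xi_1(N,L,\ell)\ell$. Since every quantity in sight is expressed through the single variable $F_{Y_k}(x)$, the natural first step is to write $F_{W_k}(x)=g(F_{Y_k}(x))$ with the polynomial $g(u)\triangleq\sum_{\ell=0}^{L-1}\xi_1(N,L,\ell)\,u^{N-\ell}$, and to study $g$ near $u=1$, since $F_{Y_k}(x)\to1$ as $x\to\infty$ (recall $Y_k$ is supported on $[0,\infty)$).

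The key normalization is the identity $\sum_{\ell=0}^{L-1}\xi_1(N,L,\ell)=1$, which must hold because $F_{W_k}$ in (\ref{both_eq_2}) is a genuine CDF, so that $F_{W_k}(\infty)=g(1)=\sum_{\ell}\xi_1=1$. From this I read off $g(1)=1$ and $g'(1)=\sum_{\ell}\xi_1(N,L,\ell)(N-\ell)=N-\sum_{\ell}\xi_1(N,L,\ell)\ell=P$. A first-order Taylor expansion of $g$ about $u=1$ then yields $1-g(u)=P(1-u)+O((1-u)^2)$, so that, writing $t\triangleq 1-F_{Y_k}(x)\to0$, the numerator behaves as $1-F_{W_k}(x)=Pt+O(t^2)$. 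The denominator is handled identically: expanding $1-u^{P}$ about $u=1$ gives $1-u^{P}=P(1-u)+O((1-u)^2)$, hence $1-(F_{Y_k}(x))^{P}=Pt+O(t^2)$. Dividing the two expansions and letting $x\to\infty$ (equivalently $t\to0$) produces
\begin{equation*}
\lim_{x\to\infty}\frac{1-F_{W_k}(x)}{1-(F_{Y_k}(x))^{P}}=\lim_{t\to0}\frac{Pt+O(t^2)}{Pt+O(t^2)}=1,
\end{equation*}
which is precisely the tail-equivalence condition (\ref{both_eq_11}), so Theorem \ref{theorem_4} delivers the claim.

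The one point that needs care — and which I expect to be the main obstacle — is ensuring that the common leading coefficient $P$ is strictly positive, so that the cancellation in the ratio is legitimate (a vanishing $P$ would force the expansion to higher order) and so that $(F_{Y_k})^{P}$ is itself a bona fide distribution function. This follows from the same CDF property: since $1-F_{W_k}(x)>0$ for every finite $x$ and $1-F_{W_k}(x)\sim P\,(1-F_{Y_k}(x))$ with $1-F_{Y_k}(x)>0$, positivity of the tail forces $P>0$. It is worth emphasizing that the whole argument uses only $g(1)=1$ and $F_{Y_k}\to1$, so it requires no explicit manipulation of the cumbersome coefficients $\xi_1(N,L,\ell)$ themselves, which is what makes the otherwise intractable direct verification of the domain of attraction avoidable.
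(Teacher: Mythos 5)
Your proposal is correct and is essentially the paper's own argument: the paper verifies the tail-equivalence condition (\ref{both_eq_11}) by applying L'Hospital's rule to the ratio of survival functions and then invoking $\sum_{\ell=0}^{L-1}\xi_1(N,L,\ell)=1$ together with $F_{Y_k}(x)\to1$, which is exactly the first-order information your Taylor expansion of $g(u)$ about $u=1$ extracts (namely $g(1)=1$ and $g'(1)=P$). Your additional remarks deriving the normalization from the CDF property and checking $P>0$ are sound refinements but do not change the route.
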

\begin{proof}
The proof is given in Appendix \ref{appenC}.
\end{proof}
Fig. \ref{fig_6} compares the exact CDF and the corresponding tail equivalence for different selective feedback $L$ under $M=4$, $N=10$, and
$\rho_k=10$ dB. The tail equivalent CDF is observed to track the exact one even when $x$ is small, which supports and lends confidence in the
power of the tail equivalence theorem. Therefore, the tail equivalence is used to study the type of convergence, which is expressed in the
following lemma.
\begin{lemma} \label{lemma_2}
(\textit{Type of Convergence of Selective Feedback}) Given the statistical property of $F_{Y_k}$ in (\ref{spatial_eq_1}), $F_{W_k}$ belongs to
the domain of attraction of type $3$ Gumbel distribution.
\end{lemma}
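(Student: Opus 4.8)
The plan is to avoid any direct verification of the von~Mises / domain-of-attraction conditions for $F_{W_k}$---which, as the surrounding text notes, is intractable because of the unwieldy coefficients $\xi_1(N,L,\ell)$---and instead route everything through the tail equivalence already established in Corollary~\ref{corollary_4}. The logic is a two-link chain: $F_{Y_k}$ is Gumbel $\Rightarrow$ $(F_{Y_k})^{q}$ is Gumbel $\Rightarrow$ $F_{W_k}$ is Gumbel, where $q = N - \sum_{\ell=0}^{L-1}\xi_1(N,L,\ell)\ell$ is the single exponent supplied by Corollary~\ref{corollary_4}. The starting point is already granted in the text preceding Theorem~\ref{theorem_4}: $F_{Y_k}$ lies in the domain of attraction of the type~3 Gumbel law, so there exist normalizing constants $a_K\in\mathbb{R}$, $b_K>0$ with $(F_{Y_k}(a_K+b_Kx))^{K}\to\Lambda(x)$, where $\Lambda(x)=\exp(-e^{-x})$.

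The core step is to transfer this to the power $(F_{Y_k})^{q}$. Using the identity $((F_{Y_k})^{q})^{K}=((F_{Y_k})^{K})^{q}$ together with continuity of $y\mapsto y^{q}$ on $[0,1]$, I would conclude
\begin{equation*}
((F_{Y_k}(a_K+b_Kx))^{q})^{K}\longrightarrow(\Lambda(x))^{q}=\exp(-q\,e^{-x})=\Lambda(x-\ln q),
\end{equation*}
a nondegenerate law of the same Gumbel type. Equivalently, with the shifted constants $a'_K=a_K+b_K\ln q$ and $b'_K=b_K$, one has $((F_{Y_k})^{q}(a'_K+b'_Kx))^{K}\to\Lambda(x)$, so $(F_{Y_k})^{q}$ is itself in the type~3 Gumbel domain. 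This is where I would pause to confirm the one genuine precondition, namely $q>0$ (needed for $(F_{Y_k})^{q}$ to be a proper CDF, for $\Lambda^{q}$ to stay nondegenerate, and for $\ln q$ to exist): the extreme cases $L=1$ and $L=N$ give $q=N$ and $q=1$ respectively, consistent with the stated special cases $F_{W_k}=(F_{Y_k})^{N}$ and $F_{W_k}=F_{Y_k}$, and in general $q$ behaves as an effective number of resource blocks with $1\le q\le N$. Note also that the argument requires no integrality of $q$, since the identity and the continuity of $y\mapsto y^{q}$ hold for arbitrary real exponents.

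Finally I would close the chain with the Tail Equivalence Theorem. Corollary~\ref{corollary_4} gives $\lim_{x\to\infty}(1-F_{W_k}(x))/(1-(F_{Y_k}(x))^{q})=1$; setting $U=(F_{Y_k})^{q}$ and $V=F_{W_k}$ in Theorem~\ref{theorem_4} and feeding in the constants $a'_K,b'_K$ from the previous step yields $(F_{W_k}(a'_K+b'_Kx))^{K}\to\Lambda(x)$, so $F_{W_k}$ belongs to the domain of attraction of the type~3 Gumbel distribution, as claimed. I expect the main obstacle to be conceptual rather than computational: recognizing that all of the $\xi_1$-dependence can be absorbed into (i) a single positive exponent through Corollary~\ref{corollary_4} and (ii) the invariance of the Gumbel domain under positive powers. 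Once these two observations are in place, the remainder is a short two-step transfer and no direct checking of convergence conditions for the messy $F_{W_k}$ is required.
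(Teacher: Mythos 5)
Your proposal is correct and follows essentially the same route as the paper's proof: establish that $F_{Y_k}$ is in the type~3 Gumbel domain, note that the positive power $(F_{Y_k})^{q}$ with $q=N-\sum_{\ell=0}^{L-1}\xi_1(N,L,\ell)\ell$ stays in the same domain, and then transfer to $F_{W_k}$ via Corollary~\ref{corollary_4} and Theorem~\ref{theorem_4}. The only difference is one of detail: the paper simply cites \cite{galambos78} for the power-preservation step, whereas you spell out the computation $(\Lambda(x))^{q}=\Lambda(x-\ln q)$ with the shifted normalizing constants and explicitly verify the precondition $q>0$.
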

\begin{proof}
The proof is given in Appendix \ref{appenC}.
\end{proof}
Having obtained the type of convergence for $F_{W_k}$, the rate scaling result can be derived by referring to the normalizing constants theorem
in Appendix \ref{appenB}. The individual rate scaling is provided below.
\begin{theorem} \label{theorem_5}
(\textit{Individual Scaling Laws Under Spatial and Spectral Dimension Selective Feedback})
\begin{equation}\label{both_eq_12}
\mathop{\lim}\limits_{K\rightarrow\infty}\frac{\hat{R}_k}{M\log_2\log_2 \frac{(N-\sum_{\ell=0}^{L-1}\xi_1(N,L,\ell)\ell)L}{MN}K}=1,\quad\quad
\mathop{\lim}\limits_{K\rightarrow\infty}\frac{\hat{R}_k^{\mathsf{App}}}{M\log_2\log_2
\frac{(N-\sum_{\ell=0}^{L-1}\xi_1(N,L,\ell)\ell)L}{N}K}=1.
\end{equation}
\end{theorem}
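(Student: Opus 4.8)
The plan is to reduce Theorem~\ref{theorem_5} to the extreme-value scaling argument already used for Theorem~\ref{theorem_2} and Corollary~\ref{corollary_2}, adding the two reductions needed for the spectral dimension. First I would dispose of the random number of observations per beam per resource block. Since $\kappa_{n,m}(K)$ is binomial with success probability $\frac{L}{MN}$, the strong law of large numbers (equivalently the central limit statement in (\ref{both_eq_10})) gives $\frac{\kappa_{n,m}(K)}{K}\to\frac{L}{MN}$ in probability, so Theorem~\ref{theorem_3} applies with a degenerate positive limit $\vartheta=\frac{L}{MN}$. This replaces the genuinely random sample size by the deterministic effective size $\frac{KL}{MN}$, so the extreme order statistic governing the selected $\mathsf{SINR}$ on each beam at each resource block is, in the limit, distributed as $(F_{W_k})^{\frac{KL}{MN}}$.

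Next I would fix the domain of attraction and the correct effective exponent. By Lemma~\ref{lemma_2}, $F_{W_k}$ lies in the type~3 Gumbel domain, so the normalizing-constants theorem of Appendix~\ref{appenB} applies: for a light-tailed CDF with tail $\sim c_0\,x^{-\gamma}e^{-cx}$ (here $c=M/\rho_k$), solving $\nu(1-F(a_\nu))=1$ gives location $a_\nu\sim c^{-1}\ln\nu$ and bounded scale $b_\nu\sim c^{-1}$, so that $\mathbb{E}[\log_2(1+\text{max of }\nu\text{ copies})]\sim\log_2 a_\nu\sim\log_2\log_2\nu$; summing over the $M$ beams yields the $M\log_2\log_2\nu$ scaling. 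It remains to read off $\nu$. I invoke Corollary~\ref{corollary_4} with the tail-equivalence theorem (Theorem~\ref{theorem_4}): writing $\mu\triangleq N-\sum_{\ell=0}^{L-1}\xi_1(N,L,\ell)\ell$, the CDF $F_{W_k}$ is tail equivalent to $(F_{Y_k})^{\mu}$, hence shares its normalizing constants, so $(F_{W_k})^{\frac{KL}{MN}}$ carries the same leading location as $(F_{Y_k})^{\mu\frac{KL}{MN}}$. Thus the exact rate sees effectively $\nu=\frac{\mu LK}{MN}$ copies of $F_{Y_k}$, and since the tail rate of $F_{Y_k}$ is again $c=M/\rho_k$ (only the $\imath=M$ term in (\ref{spatial_eq_1}) survives as $x\to\infty$), I obtain $\hat{R}_k\sim M\log_2\log_2\frac{\mu L}{MN}K$, which is the first claim.

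For the approximated rate $\hat{R}_k^{\mathsf{App}}$ I would rerun the computation after substituting the negative-association approximation $F_{Y_k}\simeq(F_{Z_k})^M$ from (\ref{spatial_eq_7}). Then $F_{W_k}$ behaves as $(F_{Z_k})^{M\mu}$, and the governing extreme order statistic becomes $(F_{Z_k})^{M\mu\frac{KL}{MN}}=(F_{Z_k})^{\frac{\mu LK}{N}}$: the extra factor $M$ in the exponent cancels the $\frac{1}{M}$ contributed by the spatial selective feedback, which is the \emph{multiplicative effect} noted after Corollary~\ref{corollary_2}. Hence the number of virtual users with base CDF $F_{Z_k}$ is $\frac{\mu LK}{N}$, giving $\hat{R}_k^{\mathsf{App}}\sim M\log_2\log_2\frac{\mu L}{N}K$, the second claim. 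In both ratios I would then argue that the constant $c=M/\rho_k$, the multiplicative prefactors, and the polynomial corrections in the tails of $F_{Z_k}$ and $F_{Y_k}$ all sit strictly inside a single $\log_2\log_2$ and therefore contribute only lower-order additive terms that vanish from the limiting ratio --- exactly as $\rho_k$ disappears in Theorem~\ref{theorem_2}.

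The step I expect to be the main obstacle is not the bookkeeping of $\nu$ but the passage from convergence in distribution of the normalized maximum to convergence of the \emph{expected} log-rate $\mathbb{E}[\log_2(1+\cdot)]$: this requires a uniform-integrability or tail-truncation argument guaranteeing that the upper fluctuations of $\log_2(1+\Lambda_{\nu})$ are negligible relative to $\log_2 a_\nu$, which is precisely where the light Gumbel-domain tails of $F_{Y_k}$ and $F_{Z_k}$ are essential. A secondary point requiring care is combining Theorem~\ref{theorem_3} with tail equivalence: in general Theorem~\ref{theorem_3} produces the mixture $\int G^{y}\,d\mathbb{P}(\vartheta<y)$, but because $\vartheta$ is degenerate here the mixture collapses to a single Gumbel law, and I would check that tail equivalence survives the random-sample-size limit before chaining the two reductions. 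Finally, the scheduling-outage event (a beam receiving no feedback) has probability tending to zero, so it does not affect the leading scaling.
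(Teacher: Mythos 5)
Your proposal follows essentially the same route as the paper's proof: reduce the random sample size to the deterministic effective exponent $\frac{KL}{MN}$ via Theorem~\ref{theorem_3}, replace $F_{W_k}$ by $(F_{Y_k})^{N-\sum_{\ell}\xi_1(N,L,\ell)\ell}$ via tail equivalence (Corollary~\ref{corollary_4} and Lemma~\ref{lemma_2}), extract the normalizing constants from Theorem~\ref{theorem_6}, and for $\hat{R}_k^{\mathsf{App}}$ substitute $(F_{Z_k})^M$ to expose the multiplicative effect. The obstacle you flag --- passing from convergence in distribution to convergence of the expected log-rate --- is handled in the paper by invoking Corollary~A.1 of \cite{sharif05}, exactly as in the proof of Corollary~\ref{corollary_2}.
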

\begin{proof}
The proof is given in Appendix \ref{appenC}.
\end{proof}
\textit{Remark:} For the exact rate $\hat{R}_k$, the ultimate equivalent CDF of interest is
$F_{Y_k}^{\frac{(N-\sum_{\ell=0}^{L-1}\xi_1(N,L,\ell)\ell)L}{MN}K}$, thus the exponent $\frac{(N-\sum_{\ell=0}^{L-1}\xi_1(N,L,\ell)\ell)L}{MN}K$
due to multiplicative effect can be seen as the virtual users for scheduling competition. This exponent is for the general best-L spectral
dimension feedback. For the full feedback $L=N$, since $\xi_1(N,N,\ell)$ equals $1$ for $\ell=N-1$ and $0$ otherwise, the CDF becomes
$F_{Y_k}^{\frac{K}{M}}$. For the best-1 feedback $L=1$, since $\xi_1(N,1,\ell)$ is $1$ for $\ell=0$ and $0$ otherwise, the CDF becomes
$F_{Y_k}^{\frac{K}{M}}$. Intuitively, the best-1 feedback is asymptotically optimal due to the same number of virtual users. In other words,
even though additional maximization reduces the average number of  variables for feedback, it counteracts this reduction by increasing the
exponent of the CDF. The number of virtual users is the limiting factor that dominates rate scaling. For the approximated rate
$\hat{R}_k^{\mathsf{App}}$, since the approximated CDF compensates for the spatial dimension selection by increasing the exponent, the rate
scaling differs by a factor of $M$.

\section{Conclusion}\label{conclusion}
In this paper, an analytical approach is used to investigate the problem of random beamforming with heterogeneous users and selective feedback.
The heterogenous user scenario corresponds to the practical scenario of potentially different large scale channel effects for different users.
We leverage the CDF-based scheduling policy to decouple each user's rate and thus theoretically examine the individual user rate. We develop the
notion of individual sum rate to analyze the rate scaling for each individual user. We focus our analysis in this work on theoretically
understanding the effect of selective feedback in both spatial and spectral dimensions. On the exact analysis part, extensive numerical results
show that our approximate expression for the  rate under selective feedback is effective and provides an efficient expression for computing the
exact rate. On the asymptotic analysis part, we develop the notion of virtual users and the multiplicative effect to explain the impact of
selective feedback on rate scaling. We further discover that the limiting factor for the rate scaling is the exponent for the ultimate CDF of
the selected $\mathsf{SINR}$ at the scheduler side. The extension of this work from single antenna users to multiple antenna users and more
generally the investigation of a multicell MIMO setup with advanced receiver design will be the subject of our future work.


%

\appendices
\section{}\label{appenA}
\textit{Proof of Theorem \ref{theorem_1}:} With the help of Lemma \ref{lemma_1}, $\mathcal{J}_k(\epsilon)$ can be computed as
\begin{align}
\mathcal{J}_k(\epsilon)&\mathop{=}\limits^{(a)}\frac{\epsilon}{\ln2}\sum_{i=0}^{\epsilon-1}{\epsilon-1\choose
i}\frac{(-1)^{i}}{i+1}\int_0^{\infty}\ln(1+x)d\left(1-\frac{e^{-\frac{M(i+1)x}{\rho_k}}}{(1+x)^{(M-1)(i+1)}}\right)\notag\\
\label{appen_eq_2} &\mathop{=}\limits^{(b)} \frac{\epsilon}{\ln2}\sum_{i=0}^{\epsilon-1}{\epsilon-1\choose
i}\frac{(-1)^{i}}{i+1}\int_0^{\infty}\frac{e^{-\frac{M(i+1)x}{\rho_k}}}{(1+x)^{(M-1)(i+1)+1}}dx,
\end{align}
where (a) follows from applying Lemma \ref{lemma_1}; (b) follows from integration by parts. The closed form result for
$\mathcal{I}(\alpha,\beta)$ in Theorem \ref{theorem_1} can be computed in a recursive manner \cite{gradshteyn07} and is presented as follows
\begin{equation}
\label{appen_eq_3} \mathcal{I}(\alpha,\beta)=\left\{
\begin{array}{l} \frac{(-1)^{\beta-1}\alpha^{\beta-1}e^{\alpha}E_1(\alpha)}{(\beta-1)!}+\mathop{\sum}\limits_{i=1}^{\beta-1}\frac{(i-1)!}{(\beta-1)!}(-1)^{\beta-i-1}\alpha^{\beta-i-1}, \quad \beta\geq2\\
e^{\alpha}E_1(\alpha), \quad \beta=1\\\end{array} \right.\\
\end{equation}
where $E_1(x)=\int_x^\infty \frac{e^{-t}}{t}dt$ is the exponential integral function of the first order \cite{abramowitz72}.

\section{}\label{appenB}
\textit{Proof of Corollary \ref{corollary_1}:}
\begin{align}
\hat{R}_k^{\mathsf{App}}&\mathop{=}\limits^{(a)} M\sum_{\tau_1=1}^K{K\choose
\tau_1}\left(\frac{1}{M}\right)^{\tau_1}\left(1-\frac{1}{M}\right)^{K-\tau_1}\int_0^{\infty}\log_2(1+x)d(F_{Z_k}(x))^{M\tau_1}\notag\\
\label{appen_eq_4} &\mathop{=}\limits^{(b)}M\sum_{\tau_1=1}^K{K\choose
\tau_1}\left(\frac{1}{M}\right)^{\tau_1}\left(1-\frac{1}{M}\right)^{K-\tau_1}\mathcal{J}_k(M\tau_1),
\end{align}
where (a) follows from the CDF approximation in (\ref{spatial_eq_7}); (b) follows from the definition and computation of
$\mathcal{J}_k(\epsilon)$.

\medskip

\textit{Proof of Corollary \ref{corollary_2}:} It is shown in \cite{pugh10} that $F_{Y_k}$ belongs to the domain of attraction of type $3$
Gumbel distribution \cite{david03}. Thus if the number of $\mathsf{SINR}$ to maximize over for each beam is fixed and equals the number of users
$K$, then the following equation holds: $\mathop{\lim}\limits_{K\rightarrow\infty}(F_{Y_k}(a_{k:K}+b_{k:K}x))^K=\Psi(x)$, where
$\Psi(x)=e^{-e^{-x}}$ is the type $3$ Gumbel distribution, $a_{k:K}$ and $b_{k:K}$ represent the normalizing constants for user $k$. From
Theorem \ref{theorem_3}, the number of $\mathsf{SINR}$ to maximize over for each beam approaches $\frac{K}{M}$. Let $c_{k:K}$ and $d_{k:K}$
denote the normalizing constants for user $k$ under the selective feedback case. Then the following equation holds:
$\mathop{\lim}\limits_{K\rightarrow\infty}(F_{Y_k}(c_{k:K}+d_{k:K}x))^{\frac{K}{M}}=\Psi(x)$. In order to obtain $c_{k:K}$ and $d_{k:K}$, the
following theorem is called upon.
\begin{theorem} \label{theorem_6}
(\textit{The Normalizing Constants Theorem \cite{galambos78})} Let $F_K(y)$ be a sequence of distribution functions. Let $a_K$, $b_K>0$, $c_K$,
and $d_K>0$ be sequences of real numbers such that, as $K\rightarrow\infty$,
\begin{equation}
\label{appen_eq_5} \lim F_K(a_K+b_Kx)=U(x),\quad \lim F_K(c_K+d_Kx)=V(x)
\end{equation}
for all continuity points $x$ of the limits, where $U(x)$ and $V(x)$ are nondegenerate distribution functions. Then, as $K\rightarrow\infty$,
the limits: $\lim \frac{d_K}{b_K}=B\neq 0$, $\lim \frac{c_K-a_K}{b_K}=A$ are finite, and $V(x)=U(A+Bx)$.
\end{theorem}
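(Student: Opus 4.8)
The plan is to reduce Theorem \ref{theorem_6} to the classical convergence-to-types (Khintchine) lemma and then to prove that lemma through quantile functions. First I would fold the two normalizations into a single composition. Setting $\alpha_K\triangleq\frac{c_K-a_K}{b_K}$ and $\beta_K\triangleq\frac{d_K}{b_K}>0$, the algebraic identity $c_K+d_Kx=a_K+b_K(\alpha_K+\beta_Kx)$ gives
\begin{equation}
F_K(c_K+d_Kx)=G_K(\alpha_K+\beta_Kx),\qquad G_K(y)\triangleq F_K(a_K+b_Ky).
\end{equation}
The hypotheses then say that $G_K\to U$ and $G_K(\alpha_K+\beta_K x)\to V$ weakly with $U,V$ nondegenerate, and the target becomes: $\beta_K\to B$ with $B$ finite and nonzero, $\alpha_K\to A$ finite, and $V(x)=U(A+Bx)$, which after unwinding $\alpha_K,\beta_K$ is exactly the assertion of the theorem.

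The main tool will be the generalized (left-continuous) inverse, together with the standard fact that weak convergence $G_K\to U$ forces $G_K^{-1}(t)\to U^{-1}(t)$ at every continuity point $t\in(0,1)$ of $U^{-1}$. Writing $V_K(x)\triangleq G_K(\alpha_K+\beta_Kx)$, direct inversion yields the exact relation
\begin{equation}
V_K^{-1}(t)=\frac{G_K^{-1}(t)-\alpha_K}{\beta_K}.
\end{equation}
Since $U$ and $V$ are nondegenerate, I can fix two levels $t_1<t_2$ in $(0,1)$ that are common continuity points of $U^{-1}$ and $V^{-1}$ and at which the quantiles are strictly separated, that is $U^{-1}(t_1)<U^{-1}(t_2)$ and $V^{-1}(t_1)<V^{-1}(t_2)$.

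With these levels fixed I would subtract the inverse relation at $t_2$ and $t_1$, eliminating $\alpha_K$, to obtain
\begin{equation}
V_K^{-1}(t_2)-V_K^{-1}(t_1)=\frac{G_K^{-1}(t_2)-G_K^{-1}(t_1)}{\beta_K}.
\end{equation}
Letting $K\to\infty$, the left-hand side tends to the strictly positive finite number $V^{-1}(t_2)-V^{-1}(t_1)$, while the numerator on the right tends to the strictly positive finite number $U^{-1}(t_2)-U^{-1}(t_1)$; hence $\beta_K$ must converge to $B=\frac{U^{-1}(t_2)-U^{-1}(t_1)}{V^{-1}(t_2)-V^{-1}(t_1)}$, finite and nonzero. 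Substituting back through $\alpha_K=G_K^{-1}(t_1)-\beta_K V_K^{-1}(t_1)$ shows $\alpha_K\to A\triangleq U^{-1}(t_1)-B\,V^{-1}(t_1)$, finite. Finally, because $\alpha_K+\beta_Kx\to A+Bx$ and $G_K\to U$ weakly, at each continuity point $x$ of $U(A+B\,x)$ one has $V(x)=\lim_{K\to\infty}G_K(\alpha_K+\beta_Kx)=U(A+Bx)$, and right-continuity extends the identity to all $x$.

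The step I expect to be the crux is showing that $\beta_K$ stays bounded away from both $0$ and $\infty$; this is precisely where the nondegeneracy of \emph{both} limit laws is indispensable, as it is what makes the quantile gaps serving as numerator and denominator strictly positive and finite (were either limit degenerate, one gap would vanish and the ratio would be uninformative). The remaining points are routine but require care: handling the left-continuous generalized inverse consistently, justifying quantile convergence only at continuity points, and noting that the separating levels $t_1,t_2$ can always be chosen since the set of discontinuities of $U^{-1}$ and $V^{-1}$ is at most countable.
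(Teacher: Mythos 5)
Your argument is correct, but note that the paper does not prove this statement at all: Theorem \ref{theorem_6} is the classical convergence-to-types (Khintchine) theorem, imported verbatim from Galambos's book and used as a black box to transfer the normalizing constants from $(F_{Y_k})^K$ to $(F_{Y_k})^{K/M}$. What you have written is essentially the standard quantile-function proof of that classical result: folding the two affine normalizations into one, inverting, differencing two quantile levels to isolate $\beta_K$, and back-substituting for $\alpha_K$. All the main steps are sound — the identity $V_K^{-1}(t)=(G_K^{-1}(t)-\alpha_K)/\beta_K$ holds for the left-continuous generalized inverse, weak convergence does give quantile convergence at continuity points of the limiting quantile, and the final identification $V(x)=U(A+Bx)$ follows from $G_K(y_K)\to U(y)$ whenever $y_K\to y$ with $y$ a continuity point of $U$. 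The one step you assert without justification is the \emph{simultaneous} choice of levels $t_1<t_2$ at which both $U^{-1}$ and $V^{-1}$ strictly increase; countability of the discontinuities handles the continuity requirement but not the separation requirement. It is true, and worth a line: nondegeneracy gives $u<v$ with $U^{-1}(u)<U^{-1}(v)$ and $u'<v'$ with $V^{-1}(u')<V^{-1}(v')$, and monotonicity then shows that any $t_1\le\min(u,u')$ and $t_2\ge\max(v,v')$ satisfy both strict inequalities, after which $t_1,t_2$ can be perturbed to common continuity points. With that line added, your proof is complete and gives $B>0$ (slightly sharper than the stated $B\neq 0$, and consistent with $b_K,d_K>0$); the trade-off versus the paper is simply self-containedness versus citation of a standard reference.
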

The spatial dimension selective feedback case possesses the following situation in Theorem \ref{theorem_6}: $F_K(x)=(F_{Y_k}(x))^K$,
$a_K=a_{k:K}$, $b_K=b_{k:K}$, $c_K=c_{k:K}$, $d_K=d_{k:K}$, $U(x)=\Psi(x)$, and $V(x)=(\Psi(x))^M$. The sequence of $a_{k:K}$ has been derived
in \cite{sharif05} as: $a_{k:K}=\rho_k\log_2 K-\rho_k(M-1)\log_2\log_2 K+o(1)$. A suitable choice of $b_{k:K}$ for type $3$ is $g_k(b_{k:K})$,
where $g_k(x)$ is the growth function for user $k$ defined by $g_k(x)\triangleq \frac{1-F_{Z_k}(x)}{f_{Z_k}(x)}$. Thus a suitable sequence is
$b_{k:K}=\rho_k$ for all $K$. Solving $(\Psi(x))^M=\Psi(A+Bx)$ yields $A=-\log M$, $B=1$. Therefore, by referring to Theorem \ref{theorem_6},
the normalizing constants can be derived to be: $c_{k:K}=\rho_k\log_2\frac{K}{M}-\rho_k(M-1)\log_2\log_2 K+o(1)$, and $d_{k:K}=\rho_k$ for all
$K$. Then by employing the Corollary A.1. in \cite{sharif05}, the individual rate for user $k$, namely $\hat{R}_k$ scales as
$M\log_2\log_2\frac{K}{M}$.

Regarding the approximated rate $\hat{R}_k^{\mathsf{App}}$, since the approximated CDF by negative association is $(F_{Z_k}(x))^M$ and the
number of $\mathsf{SINR}$ to maximize over approaches $\frac{K}{M}$, we have
$\mathop{\lim}\limits_{K\rightarrow\infty}(F_{Z_k}(c_{k:K}+d_{k:K}x))^{M\frac{K}{M}}=\mathop{\lim}\limits_{K\rightarrow\infty}(F_{Z_k}(c_{k:K}+d_{k:K}x))^K=\Psi(x)$.
Thus the normalizing constants $c_{k:K}=a_{k:K}$, and $d_{k:K}=b_{k:K}$, which enables the approximated rate $\hat{R}_k^{\mathsf{App}}$ to scale
as $M\log_2\log_2 K$.

\section{}\label{appenC}
\textit{Proof of Corollary \ref{corollary_4}:} Given Theorem \ref{theorem_4}, the following equality holds:
\begin{align}
&\lim_{x\rightarrow\infty}\frac{1-\sum_{\ell=0}^{L-1}\xi_1(N,L,\ell)(F_{Y_k}(x))^{N-\ell}}{1-(F_{Y_k}(x))^{N-\sum_{\ell=0}^{L-1}\xi_1(N,L,\ell)\ell}}\notag\\
\label{appen_eq_6}&\mathop{=}\limits^{(a)}\lim_{x\rightarrow\infty}\frac{\sum_{\ell=0}^{L-1}\xi_1(N,L,\ell)(N-\ell)(F_{Y_k}(x))^{N-\ell-1}f_{Y_k}(x)}{(N-\sum_{\ell=0}^{L-1}\xi_1(N,L,\ell)\ell)(F_{Y_k}(x))^{N-\sum_{\ell=0}^{L-1}\xi_1(N,L,\ell)\ell-1}f_{Y_k}(x)}\mathop{=}\limits^{(b)}1,
\end{align}
where (a) follows from the L'Hospital's rule; (b) follows from the fact that $\sum_{\ell=0}^{L-1}\xi_1(N,L,\ell)=1$.

\medskip

\textit{Proof of Lemma \ref{lemma_2}:} $F_{Y_k}$ with statistics in (\ref{spatial_eq_1}) belongs to the domain of attraction of type $3$.
It can be shown that for any distribution function $F(x)$  which  converges weakly to the limiting distribution, then its exponent form $F^{\epsilon}(x)$
has the same type of convergence \cite{galambos78}, $(F_{Y_k}(x))^{N-\sum_{\ell=0}^{L-1}\xi_1(N,L,\ell)\ell}$ belongs to the domain of
attraction of type $3$. Then by Theorem \ref{theorem_4}, $F_{W_k}$ belongs to the domain of attraction of type $3$.

\medskip

\textit{Proof of Theorem \ref{theorem_5}:} A procedure similar to that used in proving Corollary \ref{corollary_2} can be used here. Since the number of
$\mathsf{SINR}$ to maximize over for each beam at each resource block approaches $\frac{KL}{MN}$, and $F_{W_k}$ belongs to the domain of
attraction of type $3$, the following equation holds:
$\mathop{\lim}\limits_{K\rightarrow\infty}(F_{W_k}(c_{k:K}+d_{k:K}x))^{\frac{KL}{MN}}=\Psi(x)$. By referring to the tail equivalence theorem,
the equivalent equation is:
$\mathop{\lim}\limits_{K\rightarrow\infty}(F_{Y_k}(c_{k:K}+d_{k:K}x))^{\frac{KL(N-\sum_{\ell=0}^{L-1}\xi_1(N,L,\ell)\ell)}{MN}}=\Psi(x)$.
Applying Theorem \ref{theorem_6} yields the normalizing constants:
$c_{k:K}=\rho_k\log_2\frac{KL(N-\sum_{\ell=0}^{L-1}\xi_1(N,L,\ell)\ell)}{MN}-\rho_k(M-1)\log_2\log_2 K+o(1)$, and $d_{k:K}=\rho_k$ for all $K$.
Therefore, $\hat{R}_k$ scales as $M\log_2\log_2\frac{KL(N-\sum_{\ell=0}^{L-1}\xi_1(N,L,\ell)\ell)}{MN}$.

For the approximated rate $\hat{R}_k^{\mathsf{App}}$ using the approximated CDF $(F_{Z_k}(x))^M$ for $F_{Y_k}$, the following equation holds:
$\mathop{\lim}\limits_{K\rightarrow\infty}(F_{Z_k}(c_{k:K}+d_{k:K}x))^{\frac{KL(N-\sum_{\ell=0}^{L-1}\xi_1(N,L,\ell)\ell)}{N}}=\Psi(x)$. Using
the same line of arguments, $\hat{R}_k^{\mathsf{App}}$ scales as $M\log_2\log_2\frac{KL(N-\sum_{\ell=0}^{L-1}\xi_1(N,L,\ell)\ell)}{N}$.



\ifCLASSOPTIONcaptionsoff
  \newpage
\fi



%


\newpage

%

\begin{IEEEbiography}[{\includegraphics[width=1in,height=1.25in,clip,keepaspectratio]{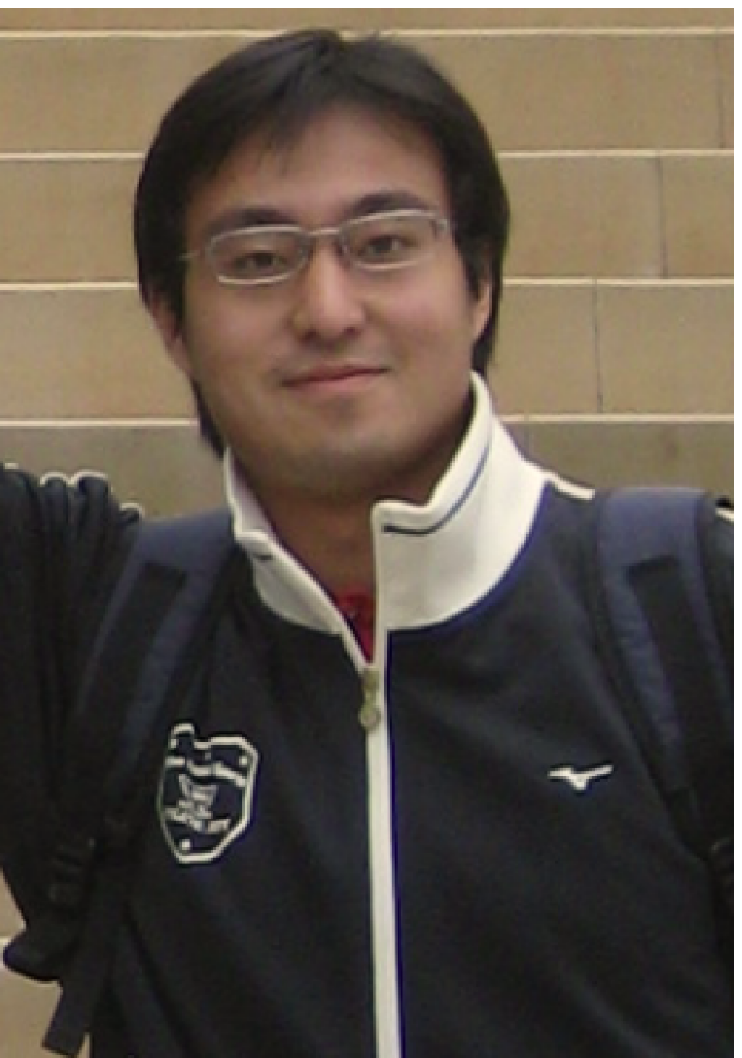}}]{Yichao Huang}
(S'10--M'12) received the B.Eng. degree in information engineering with highest honors from the Southeast University, Nanjing, China, in 2008,
and the M.S. and Ph.D. degrees in electrical engineering from the University of California, San Diego, La Jolla, in 2010 and 2012, respectively.
He then join Qualcomm, Corporate R\&D, San Diego, CA.

He interned with Qualcomm, Corporate R\&D, San Diego, CA, during summers 2011 and 2012. He was with California Institute for Telecommunications
and Information Technology (Calit2), San Diego, CA, during summer 2010. He was a visiting student at the Princeton University, Princeton, NJ,
during spring 2012. Mr. Huang received the Microsoft Young Fellow Award in 2007 from Microsoft Research Asia. He received the ECE Department
Fellowship from the University of California, San Diego in 2008, and was a finalist of Qualcomm Innovation Fellowship in 2010. His research
interests include communication theory, optimization theory, wireless networks, and signal processing for communication systems.
\end{IEEEbiography}
\begin{IEEEbiography}[{\includegraphics[width=1in,height=1.25in,clip,keepaspectratio]{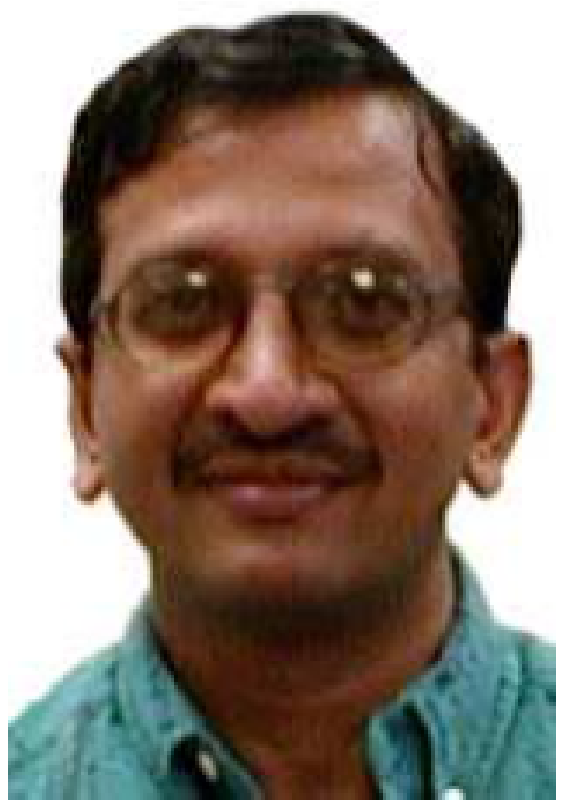}}]{Bhaskar D. Rao}
(S'80--M'83--SM'91--F'00) received the B.Tech. degree in electronics and electrical communication engineering from the Indian Institute of
Technology, Kharagpur, India, in 1979, and the M.Sc. and Ph.D. degrees from the University of Southern California, Los Angeles, in 1981 and
1983, respectively.

Since 1983, he has been with the University of California at San Diego, La Jolla, where he is currently a Professor with the Electrical and
Computer Engineering Department. He is the holder of the Ericsson endowed chair in Wireless Access Networks and was the Director of the Center
for Wireless Communications (2008--2011). His research interests include digital signal processing, estimation theory, and optimization theory,
with applications to digital communications, speech signal processing, and human--computer interactions.

Dr. Rao's research group has received several paper awards. His paper received the Best Paper Award at the 2000 Speech Coding Workshop and his
students have received student paper awards at both the 2005 and 2006 International Conference on Acoustics, Speech, and Signal Processing, as
well as the Best Student Paper Award at NIPS 2006. A paper he coauthored with B. Song and R. Cruz received the 2008 Stephen O. Rice Prize Paper
Award in the Field of Communications Systems. He was elected to the Fellow grade in 2000 for his contributions in high resolution spectral
estimation. He has been a Member of the Statistical Signal and Array Processing technical committee, the Signal Processing Theory and Methods
technical committee, and the Communications technical committee of the IEEE Signal Processing Society. He has also served on the editorial board
of the EURASIP Signal Processing Journal.
\end{IEEEbiography}




\end{document}